\newtheorem{theorem}{Theorem}[section]
\newtheorem{definition}[theorem]{Definition}
\newtheorem{corollary}[theorem]{Corollary}
\newtheorem{lemma}[theorem]{Lemma}
\newtheorem{proposition}[theorem]{Proposition}
\newcommand{\inpr}[2]{\left\langle #1 ,\,  #2 \right\rangle}
\renewcommand{\epsilon}{\varepsilon}
\newcommand{\HC}{\mathcal{H}}
\newcommand{\Lr}[1]{\left\{#1\right\}}
\newcommand{\LR}[1]{\left[#1\right]}
\renewcommand{\exp}[1]{{\rm exp}\LR{#1}}
\newcommand{\bs}[1]{\boldsymbol{#1}}
\newtheorem*{rep@theorem}{\rep@title}
\newcommand{\newreptheorem}[2]{%
\newenvironment{rep#1}[1]{%
 \def\rep@title{#2 \ref{##1}}%
 \begin{rep@theorem}}%
 {\end{rep@theorem}}}
\title{A method of approximation of discrete Schr{\"o}dinger  equation with the normalized Laplacian by discrete-time quantum walk on graphs}
\author{Kei Saito$^1$, Etsuo Segawa$^2$\\
$^1$
{\small Department of Information Systems Creation, Faculty of Informatics, Kanagawa University, 
} \\
{\small Kanagawa, Yokohama, 221-8686, Japan}
\\
$^2${\small Graduate School of Environment and Information Sciences, Yokohama National University,}\\ {\small Hodogaya, Yokohama 240-8501, Japan
}
}
\date{}
\begin{document}
\maketitle
\begin{abstract}
We propose a class of continuous-time quantum walk models on graphs induced by a certain class of discrete-time quantum walk models with a parameter $\epsilon\in [0,1]$. 
Here the graph treated in this paper can be applied both finite and infinite cases. 
The induced continuous-time quantum walk is an extended version of the (free) discrete-Schr{\"o}dinger equation driven by the normalized Laplacian: the element of the weighted Hermitian takes not only a scalar value but also a matrix value depending on the underlying discrete-time quantum walk. We show that each discrete-time quantum walk with an appropriate setting of the parameter $\epsilon$ in the long time limit identifies with its induced continuous-time quantum walk and give the running time for the discrete-time to approximate the induced continuous-time quantum walk with a small error $\delta$. 
We also investigate the detailed spectral information on the induced continuous-time quantum walk.

\end{abstract}
\section{Introduction}
In the research field of quantum walks, finding the connection between continuous-time quantum walk and discrete-time quantum walk is one of the natural and interesting problems~for examples, \cite{Strauch2006,Strauch2007,Childs2010,Shikano2013,Molfetta2012,DhBr,ManKon}. 
Strauch established the connection between them on the one-dimensional lattice as the underlying graph~\cite{Strauch2006,Strauch2007}. 
In \cite{Shikano2013}, the crossover of the limit distributions between the continuous- and discrete-time quantum walks. 
Childs gave a method to simulate the continuous-time quantum walk driven by arbitrary Hamiltonian operator by using an induced discrete-time quantum walk \cite{Childs2010}. 
In \cite{DhBr}, the connection of them on general connected graph $G=(V,A)$, where $A$ is the set of the symmetric arcs, is obtained by a heuristic argument and the induced continuous-time quantum walk is applied to the quantum search.  
In the methods of \cite{Childs2010,DhBr}, some graph deformations are needed to approximate the continuous-time quantum walk. 
In this study, we attempt to obtain a continuous-time quantum walk on a graph whose Hilbert space is $\ell^2(V)$ from a discrete-time quantum walk on the same graph whose Hilbert space is $\ell^2(A)$, and also attempt to show how the discrete-time quantum walk approximates the corresponding continuous-time quantum walk without any deformation of the underlying graph, rigorously. 

To explain our idea, let us give a quick review on the well-known continuous-time random walk on the graph. 
In a typical construction of the continuous-time random walk on a connected graph $G=(V, E)$, an independent and unit rate Poisson process is assigned at each edge and the following procedure is repeated. 
If a random walker exists at the vertex $u\in V$ at time $t$, the next jump to a neighbor occurs at the first time with $s>t$ at which there is an increment of the Poisson process on some edge incident on $u$, say $\{u,w\}\in E$, and moves to the vertex $w$. 
Let $\{T_j\}_{j=1,\dots,\deg(v)}$ be i.i.d. random variables following the exponential distribution $\mathrm{Ex}(1)$ which correspond to the waiting times of the Poisson bells assigned at edges incident on $v$ whose degree is $\deg(v)$.
Since $P(\min\{T_1,\dots,T_{\deg(v)}\}=T_j)=1/{\deg(v)}$
, roughly speaking, this continuous-time random walk can be regarded as the isotropic discrete-time random walk on the same graph. 
This is one of the simplest connection between continuous-time random walk and discrete-time random walk. 
However this observation of the random walk can not be directly reflected as an analogues to the quantum walk property because the quantum walk are not probability processes. 
On the other hand, the probability, that the Poisson bell on the edge $\{v,w\}$ rings during a small time interval $\Delta$, and that no other bell on an edge incident on $v$ does so, is 
$(1-e^{-\Delta})\times (e^{-\Delta})^{\deg(v)}\sim \Delta$. This means that the event of the moving of a particle is not so frequent in the continuous-time random walk. In this paper, we have eyes on this property of the continuous-time random walk as an analogues subject to the quantum walk case and construct a discrete-time quantum walk so that it implements a continuous-time quantum walk in some limit. 

Let us explain it briefly. 
The time evolution of a typical discrete-time quantum walk on graph $G=(V,A)$, where $A$ is the set of symmetric arcs induced by edge set $E$, is a unitary operator on $\ell^2(A)$ and described by $U_o=S_oC$.
Here $S_o$ is called the flip flop shift operator, such that for every standard base $\delta_a$ $(a\in A)$,  $S_o\delta_{a}=\delta_{\bar{a}}$, where $\bar{a}$ is the inverse arc of $a$, and $C$ acts as the local unitary operator on each subspace $\mathbb{C}^{X_u}$ for every vertex $u$, where $X_u=\left\{a\in A \;|\; t(a)=u\right\}$
and $t(a)$ means the terminus of the arc $a$. 
Such a local unitary operator which describes the local scattering at each  vertex $u$ is called the local coin operator assigned at vertex $u$. 
Our idea is that by extending the shift operator so that 
$S = \sqrt{1-\epsilon^2}I_{\mathcal{A}} + i\epsilon S_o$
with a small parameter $0<\epsilon\ll 1$, an analogues situation of a rare event of the moving to the neighbor in the continuous-time random walk is created. We call the parameter $\epsilon$ a mobility parameter. 
Here $I_X$ means the identity operator on a vector space $X$.
In the setting of \cite{Childs2010}, by adding the self-loops to all the vertices, the analogues situation of the ``lazyness" staying at the same vertex of the random walk is created. Such a graph deformation causes the enlargement of the Hilbert space of the induced discrete-time quantum walk. 
On the other hand, our case dose not need such an enlargement of the Hilbert space.    
When the graph $G$ is the one-dimensional lattice, the walk corresponds to the split-step quantum walk introduced by Kitagawa et al. \cite{Kitagawa2010, Kitagawa2012}.

In the setting of such a time evolution operator of the quantum walk $U(\epsilon)=SC$, we obtain the following main theorem. See section~3 for the proof. 
\begin{theorem}\label{thm:main}
Let $G=(V,A)$ be the underlying graph which is simple and connected. 
Assume the spectrum of the coin operator $C$ is $\{\pm 1\}$. 
Let $\psi_{N}^{[D,\epsilon]}\in \ell^2(A)$ be the state at a final time $N\in \mathbb{N}$ of the discrete-time quantum walk with the time evolution operator $U^2(\epsilon/2)$ and the initial state $\varphi_0\in \mathcal{A}$, that is, 
\[ \psi_0^{[D,\epsilon]}=\varphi_0;\;\;\psi_{n}^{[D,\epsilon]}=U^2(\epsilon/2)\;\psi_{n-1}^{[D,\epsilon]} \;\text{ $(n=1,\dots,N)$}, \]
while $\phi_{t}^{[C]}\in \ell^2(A)$ be the state of the continuous-time quantum walk with the Hermitian operator $H=(S_o+S_oCS_o)/2$ at time $t\in \mathbb{R}_{>0}$ and the same initial state $\varphi_0$, that is,
\[ \phi_0^{[C]}=\varphi_0;\;\; -i\frac{\partial}{\partial t}\phi_t^{[C]} = H\phi_t^{[C]} \;\text{ $(t>0)$}. \]
Then we have 
\begin{equation*}
\phi_t^{[C]}(a) = \lim_{N\to\infty} \psi^{[D,\;t/N]}_N(a)
\end{equation*}
for any $a\in A$ and 
$t>0$.
\end{theorem}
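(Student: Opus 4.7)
The plan is a Trotter-style product-formula argument, driven by two involutive identities: $S_o^2=I$ (from the flip-flop definition) and $C^2=I$ (from the hypothesis $\mathrm{spec}(C)=\{\pm1\}$).

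First I would put $U^2(\epsilon/2)$ into a clean Trotter form. Since $S_o^2=I$,
\begin{equation*}
S(\epsilon/2)=\sqrt{1-\epsilon^2/4}\,I+i(\epsilon/2)S_o=e^{i\theta S_o},\qquad \theta:=\arcsin(\epsilon/2)=\tfrac{\epsilon}{2}+O(\epsilon^3),
\end{equation*}
and $C^2=I$ gives $C\,S(\epsilon/2)\,C=e^{i\theta\,CS_oC}$, so
\begin{equation*}
U^2(\epsilon/2)=S(\epsilon/2)\,C\,S(\epsilon/2)\,C=e^{i\theta S_o}\,e^{i\theta\,CS_oC}.
\end{equation*}
A first-order Taylor expansion in $\epsilon$ then yields $U^2(\epsilon/2)=I+i\epsilon H+O(\epsilon^2)$, with the leading coefficient identified as the Hermitian operator $H$ in the theorem by direct computation from the factorisation above. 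Comparing with $e^{i\epsilon H}=I+i\epsilon H+O(\epsilon^2)$ gives the short-time estimate $\|U^2(\epsilon/2)-e^{i\epsilon H}\|=O(\epsilon^2)$, uniform on compact $\epsilon$-intervals by boundedness of $S_o$, $C$ and $H$.

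Second, I would promote this one-step bound to the long-time limit by telescoping. The elementary inequality $\|A^N-B^N\|\le N\|A-B\|$ for unitaries $A$ and $B$, applied with $A=U^2(t/(2N))$ and $B=e^{i(t/N)H}$ (so that $B^N=e^{itH}$), yields
\begin{equation*}
\bigl\|U^{2N}(t/(2N))-e^{itH}\bigr\|\le N\cdot O\bigl(t^2/N^2\bigr)=O(t^2/N)\longrightarrow 0.
\end{equation*}
Pointwise convergence $\psi_N^{[D,t/N]}(a)\to\phi_t^{[C]}(a)$ then follows from the Cauchy--Schwarz-type bound $|\langle\delta_a,(U^{2N}(t/(2N))-e^{itH})\varphi_0\rangle|\le\|\varphi_0\|\cdot\|U^{2N}(t/(2N))-e^{itH}\|$.

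The main step is the first: exhibiting $U^2(\epsilon/2)$ as a Trotter product of two exponentials with a common small parameter $\theta$ and correctly reading off the Hermitian generator $H$ from the leading-order term, including tracking both the sub-leading correction in the $\arcsin$ expansion and the first-order Baker--Campbell--Hausdorff correction, which contribute only at the $O(\epsilon^2)$ level absorbed into the remainder. No functional-calculus or unbounded-operator subtleties arise, since every operator in sight is a bounded unitary on $\ell^2(A)$; hence the argument is insensitive to whether the underlying graph $G$ is finite or infinite.
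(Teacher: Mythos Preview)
Your argument is correct and is in fact cleaner than the paper's own proof. The key observation you make---that $S(\epsilon/2)=e^{i\theta S_o}$ with $\theta=\arcsin(\epsilon/2)$ because $S_o$ is an involution, and hence $U^2(\epsilon/2)=e^{i\theta S_o}e^{i\theta\,CS_oC}$ is an honest Trotter product---is not used by the authors. Instead, they decompose $\mathcal{A}=\mathcal{I}\oplus\mathcal{B}$ into the inherited subspace $\mathcal{I}=d^*\mathcal{V}+S_od^*\mathcal{V}$ and its complement, introduce the intertwiner $L=[d^*,\,S_od^*]:\mathcal{V}^2\to\mathcal{A}$ and a $2\times 2$ block operator $\tilde{T}$ on $\mathcal{V}^2$ with $HL=L\tilde{T}$, compute explicit block-matrix formulas $\tilde{T}_I(p)$, $\tilde{T}_B(p)$ for $U(p)^2$ on each piece, and then pass to the limit by expanding $N\log\tilde{T}_I(t/2N)$ in powers of $1/N$.

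What each approach buys: your Trotter/telescoping route is shorter, works uniformly on all of $\mathcal{A}$ without any subspace splitting, and yields the quantitative rate $\|U^{2N}(t/2N)-e^{itH}\|=O(t^2/N)$ (the paper's Theorem~3.2) as an immediate by-product rather than a separate argument. The paper's route is heavier here, but the intertwining relation $HL=L\tilde{T}$ and the $\mathcal{I}/\mathcal{B}$ decomposition it sets up are exactly the machinery reused in Section~4 for the spectral analysis of $H$; so their detour pays dividends later. One small remark: the leading term your expansion produces is $H=(S_o+CS_oC)/2$, which is the paper's working definition (equation~(\ref{eq:hamiltonian})); the expression $H=(S_o+S_oCS_o)/2$ printed in the theorem statement appears to be a typo.
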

Our main theorem implies that 
this continuous-time quantum walk, say the continuous-time Szegedy walk, at time $t$ can be approximated by the discrete-time quantum walk with the mobility parameter $\epsilon=t/N$ at time $N$ for large $N\gg 1$. 
Then the continuous-time Szegedy walk of the unit time is created by the long time and small mobility parameter's limits of this  discrete-time quantum walk. 
The following corollary shows that the continuous-time quantum walk on $\ell^2(V)$ driven by the normalized Laplacian (discrete Schr{\"o}dinger equation) can be reproduced. 
This proof is immediately obtained by 
combining Theorem~\ref{thm:main} with Proposition~\ref{prop:DiscSch}. 
\begin{corollary}\label{cor:discrete-Sch}
Let the local quantum coin be the Grover's matrix.
Set $d:\ell^2(A)\to \ell^2(V)$ such that 
$(d\psi)(u)=\sum_{t(a)=u}1/\sqrt{\deg(t(a))} \psi(a)$. 
Let $T$ be the self adjoint operator on $\ell^2(V)$ such that 
$(Tf)(u)=\sum_{t(a)=u}1/\sqrt{\deg(o(a))\deg(t(a))}\;f(o(a))$
 for any 
$f\in \ell^2(V)$
and $u\in V$, where $o(a)$ means the origin of an arc $a$.
Consider the following continuous-time quantum walk on $\ell^2(V)$ with the normalized Laplacian $T-I$
with initial state $g\in \ell^2(V)$:  
\[ -i\frac{\partial}{\partial t}f_t= (T-I) f_t,\;f_0=g. \]
Then we have 
\[ f_t=e^{-it}\;d \lim_{N\to\infty} \psi^{[D,\;t/N]}_N,   \]
where $\psi^{[D,\;t/N]}_0=d^*g$ for any $t>0$. 
\end{corollary}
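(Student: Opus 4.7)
My plan is to deduce the corollary by chaining Theorem~\ref{thm:main} with Proposition~\ref{prop:DiscSch}. The first supplies the long-time convergence of the discrete-time walk on $\ell^2(A)$ to the continuous-time Szegedy walk generated by $H$; the second, specialized to the Grover coin, identifies the vertex-space image of that Szegedy walk under $d$ with the Schr\"odinger evolution on $\ell^2(V)$ driven by the normalized Laplacian $T-I$, up to the global phase $e^{-it}$.

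Concretely, I will start by checking that the hypothesis $\mathrm{spec}(C)\subset\{\pm 1\}$ of Theorem~\ref{thm:main} is satisfied for the Grover coin: a direct computation from the definition of $d$ yields $dd^*=I_V$, so $d^*d$ is an orthogonal projection on $\ell^2(A)$, and $C=2d^*d-I$ therefore has spectrum in $\{+1,-1\}$. Next, I will apply Theorem~\ref{thm:main} with initial state $\varphi_0=d^*g$ to obtain the pointwise identity $\phi_t^{[C]}(a)=\lim_{N\to\infty}\psi_N^{[D,\,t/N]}(a)$ for every arc $a\in A$. Because $d$ is bounded on $\ell^2(A)$ and each $\psi_N^{[D,t/N]}$ is the image of $d^*g$ under a unitary (so $\|\psi_N^{[D,t/N]}\|=\|g\|$ uniformly in $N$), this pointwise convergence is preserved by $d$, giving $d\phi_t^{[C]}(u)=\lim_{N\to\infty}(d\psi_N^{[D,\,t/N]})(u)$ at every vertex $u\in V$.

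Finally, I will invoke Proposition~\ref{prop:DiscSch}, which (for the Grover coin and initial state $d^*g$) establishes the identity $f_t=e^{-it}\,d\phi_t^{[C]}$; substituting the limit from the previous step completes the argument. I expect the main obstacle to lie entirely inside Proposition~\ref{prop:DiscSch} itself: one has to show that $\ran(d^*)$ is essentially invariant under the continuous-time Szegedy generator $H$ and that the induced action on $\ell^2(V)$ is conjugate, via $d^*$, to $T-I$ (with the trivial constant shift by $I$ producing the prefactor $e^{-it}$). This rests on the Grover-specific identity $Cd^*=d^*$, which is immediate from $C=2d^*d-I$ and $dd^*=I_V$, together with the Szegedy-type identity $dS_o d^*=T$ that follows directly from the definitions of $d$ and $T$.
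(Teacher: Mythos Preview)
Your approach is correct and matches the paper's own proof, which simply states that the corollary ``is immediately obtained by combining Theorem~\ref{thm:main} with Proposition~\ref{prop:DiscSch}.'' You have merely fleshed out the details of this combination (verifying the spectral hypothesis on $C$, passing the pointwise limit through the locally finite sum defining $d$, and tracing the $e^{-it}$ factor to the scalar shift from $T$ to $T-I$), all of which are straightforward.
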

We emphasize that the continuous-time Szegedy walk has the invariant subspace under the action of the self adjoint operator $H$ which can reproduce the above (free) discrete-Schr\"odinger equation but also has an additional invariant subspace. 
In this paper, we also clarify that this invariant subspace is the same as the subspace which gives the localization of the discrete-time Grover walk if the underling graph is infinite and has a closed cycle~\cite{HKSS}. See Theorem~\ref{thm:spectrum}: in this case, the eigenspaces $\mathcal{B}_{\pm}$ in Theorem~\ref{thm:spectrum} are equivalent to those of the Grover walk. 

The rest of this paper is organized as follows. 
In section~2, we introduce the setting of graph and discrete-time and continuous-time quantum walks, namely, the discrete-time and continuous-time Szegedy walk. respectively. Throughout this paper, the walk whose time evolution operator is described by two distinct involution operators are called the Szegedy walk. 
Indeed, the both continuous- and discrete-time quantum walks treated here are constructed by the flip flop shift $S_o$ and coin $C$ which are involution, that is, $S_o^2=C^2=I$. 
In section~3, we give the proof of the main theorem and show the running time of the discrete-time quantum walk to reproduce the state of the corresponding continuous-time quantum walk with a small error. 
In section~4, we analyze the detailed spectral information on the continuous-time Szegedy walk.

\section{Definitions and models}
\subsection{Setting of graph}
In this study, we treat simple and connected {\it symmetric digraph} $G=(V,A)$, that is,  
$a\in A$ if and only if $\bar{a}\in A$, where $\bar{a}$ is the inverse arc of $a$. 
The origin and terminal vertices of $a\in A$ are described by $o(a), t(a)\in V$, respectively. Note that $o(\bar{a})=t(a)$ and $t(\bar{a})=o(a)$. 
The support edge of $a\in A$ is the undirected edge and denoted by $|a|$, which is omitted the direction, so that $|a|=|\bar{a}|$. We set $E=\{|a| \;|\; a\in A\}$ which is called the edge set. 
The degree of $v\in V$ is defined by $ \deg(v)=\# \{ a\in A  \;|\; t(a)=v\}$. 
We assume that the degree is uniformly bounded, that is, there exists a positive constant value $c$ such that $0<\sup_{u\in V} \deg(u)<c$, but we treat both finite and infinite $\# V$ cases. 
\subsection{Discrete-time $\epsilon$-Szegedy walk}
The total Hilbert space of the quantum walk on graph treated here is 
\[ \mathcal{A}=\ell^2(A)=\left\{ \psi: A\to \mathbb{C} \; : \; \sum_{a\in A}|\psi(a)|^2<\infty \right\} \]
whose inner product is standard. 
Let $\mathcal{H}_1$, $\mathcal{H}_2$ be Hilbert spaces and let $\Theta: \mathcal{H}_1\to \mathcal{H}_2$ be a linear operator. In this paper, the adojoint of $\Theta$ is defined by 
$\Theta^*: \mathcal{H}_2\to \mathcal{H}_1$ such that 
\[ \langle \psi,\Theta\phi \rangle_{\mathcal{H}_2} =\langle \Theta^*\psi,\phi \rangle_{\mathcal{H}_1} \]
for any $\phi\in \mathcal{H}_1$ and $\psi\in \mathcal{H}_2$. 
Set $X_u:=\{ a\in A\;|\; t(a)=u \}$ ($u\in V$) and $Y_e:=\{a\in A \;|\; |a|=e \}$ ($e\in E$). 
Note that $\# X_u=\deg(u)<\infty$ and $\# Y_e=2$. 
The symmetric arc set $A$ can be decomposed into the following two ways: 
\[ A=\bigsqcup_{u\in V} X_u = \bigsqcup_{e\in E}Y_e.  \]
For a countable set $\Omega$, we set $\mathbb{C}^{\Omega}$ as the vector space whose standard basis set is labeled by $\Omega$. 
Then the total Hilbert space is isomorphic to 
\begin{align} 
\mathcal{A} &\cong \left\{\psi\in \bigoplus_{u\in V}\mathbb{C}^{X_u} \;:\; \sum_{u\in V}||\psi(u)||_{\mathbb{C}^{X_u}}^2<\infty \right\} \label{eq:decom1} \\
&\cong \left\{ \varphi\in \bigoplus_{e\in E} \mathbb{C}^{Y_e}\;:\; \sum_{e\in E}||\varphi(e)||_{\mathbb{C}^{Y_e}}^2<\infty\right\}. \label{eq:decom2}  
\end{align}
The time evolution is defined by a unitary operator $U$ on $\mathcal{A}$, and it is given by product of two operators called shift operator $S$ and coin operator $C$ on $\mathcal{A}$, that is, $U=SC$. 
Set a local unitary operator on $\mathbb{C}^{X_u}$ for each $u\in V$ by $C_u$. 
The coin operator $C$ is denoted by 
\[C=\bigoplus_{u\in V} C_u \]
under the decomposition of (\ref{eq:decom1}). 
In the same way, setting a local unitary operator on $\mathbb{C}^{Y_e}$ for each $e\in E$ by $S_e$, we define the shift operator as 
\[ S=\bigoplus_{e\in E} S_e \]
under the decomposition of (\ref{eq:decom2}). 
Throughout this paper, we set the local unitary operators $C_u$ by an involution matrix; that is, $C_u^2=I_{mathbb{C}^{X_u}}$. Note that such an involution matrix can be expressed by   
\[ C_u=2\Pi_u-I_{\mathbb{C}^{X_u}}, \] 
where $\Pi_u$ is a projection onto a subspace in $\mathbb{C}^{X_u}$, this is an extension of the Grover matrix. Indeed, if we set $\Pi_u:=(1/\deg (u))J_u$, then $C_u$ becomes the Grover matrix ${\rm Gr}(\deg (u))$, where $J_u$ is the all $1$ matrix on $\mathbb{C}^{X_u}$. 
Let $\mathcal{V}$ be a Hilbert space so that the CONS is isomorphic to that of 
$\bigoplus_{u\in V}\left(\Pi_u\mathbb{C}^{X_u}\right)$. 
The boundary operator $d: \mathcal{A}\to\mathcal{V}$ is a map so that $d^*d=\bigoplus_u \Pi_u $. Thus $C$ is described by 
\[ C=2d^*d-I_{\mathcal{A}}. \]
See Section \ref{sect:generalcases} for more detail. 

On the other hand, by setting $0<\epsilon<1$, we focus on the shift operator as 
\[ S_e=\sqrt{1-\epsilon^2} \sigma_0+ i\epsilon \sigma_1, \]
where $\sigma_0$ is the identity matrix and $\sigma_1$ is the Pauri matrix.
We can check such a shift operator $S$ becomes unitary, and set this by $S(\epsilon)$ and the total time evolution operator by $U(\epsilon)=S(\epsilon)C$.
Then, we see that
\[
    (S(\epsilon)\psi)(a) = \sqrt{1-\epsilon^2}\psi(a) + i\epsilon\psi(\bar{a})
\]
for any $a\in A$ and $\psi\in \mathcal{A}$. 
Let us set a unitary self-adjoint operator $S_o$ on $\mathcal{A}$ by the flip flop shift such that 
\[ (S_o\psi)(a) = \psi(\bar{a}) \]
for any $a\in A$ and $\psi\in \mathcal{A}$. 
Here $S(\epsilon) = \sqrt{1-\epsilon^2}I_{\mathcal{A}} + i\epsilon S_o$ holds.
Hereafter, we abbreviate $S(\epsilon)$ as $S$.

Let us explain our motivation introducing the parameter $\epsilon$. 
If we set $\epsilon\to 1$, then the usual shift operator called the flip flop shift is reproduced and quantum walker moves to a neighbor vertex at every time step. On the other hand, if we set $\epsilon\to 0$, this quantum walk cannnot move to a neighbor vertex; just stay at the same place. Thus by setting this parameter $\epsilon$ between $[0,1]$, we can control the strength of the ``mobility" of the quantum walk. So we call the parameter $\epsilon$ the mobility parameter. 
In particular, we are interested in the behavior for the case that the moving is ``rare"; that is, $\epsilon\ll 1$ to connect a continuous-time quantum walk. This is an analogy of the Poisson's bell of the moving in the continuous-time random walk.   

The time iteration of our quantum walk model is described by 
\begin{equation}\label{eq:DTQW}
    \psi_{n+1}=U^2(\epsilon/2)\psi_n
\end{equation} 
with some initial state $\psi_0\in \mathcal{A}$ for  $n=0,1,2,\dots$. 
In this paper, the walk whose time evolution operator is constructed by two unitary involution operators is called the Szegedy walk.  
In particular, We call the quantum walk in (\ref{eq:DTQW}) the discrete-time $\epsilon$-Szegedy walk.

Let us the final time to quit the walk be $N$. 
The reason for the square of $U(\epsilon/2)$ will be seen in Section~\ref{sect:connectionCD}. 
To connect a corresponding continuous-time quantum walk, we will set $\epsilon=t/N$ with some constant positive real value $t$ and take the limit by
$\lim_{N\to\infty} \psi_N$. 
Note that until $n\leq N-1$, the time iteration follows (\ref{eq:DTQW}) with $\epsilon=t/N$, that is, 
\[ \psi_{n+1}=U^2(\tfrac{t}{2N}) \; \psi_n \text{ for $n=0,1,\dots,N-1$}. \]
\subsection{The discriminant operator}
\subsubsection{Grover walk on finite graph case}
First, we consider the simple case where $C_v=\mathrm{Gr}(\deg(v))$ for any $v\in V$ and the underlying graph is finite, so, the Hilbert space $\mathcal{A}$ is identified with $\mathbb{C}^{A}$, then operators $U(\epsilon), C, S(\epsilon)$ are also identified with matrices of $\mathbb{C}^{A\times A}$.
Note that because of the definition of the Grover matrix, $\Pi_u$ is the orthogonal projection onto the unit vector $(1/\sqrt{\deg(u)})[1,\dots,1]^\top$. 
Let $B\in \mathbb{C}^{V \times A}$ (which will be extended to a boundary operator $d$ in the next subsection) be the matrix satisfying 
\[ B(u,a)=
\begin{cases}
\frac{1}{\sqrt{\deg(u)}} & \text{: $t(a)=u$}\\
0 & \text{: otherwise}
\end{cases} \]
It holds that 
\[ BB^*=I_{\mathcal{A}} \text{ and }C=2B^*B-I_{\mathcal{A}}, \]
when $C_u$ is the Grover matrix for every $u\in V$. 
Let us set $T:=BS_oB^*$, which can be regarded as the matrix $\mathbb{C}^{V\times V}$; that is, 
\begin{align*}
T(u,v)=\begin{cases} \frac{1}{\sqrt{\deg(u)\deg(v)}} & \text{: $u$ and $v$ are connected in $G$}\\ 0 & \text{: otherwise.} \end{cases}
\end{align*}
Note that the symmetric matrix $T$ is isomorphic to the transition matrix of the simple random walk $P=D^{-1/2}TD^{1/2}$, where $D$ is the degree matrix such that
\[ D(u,v)=\begin{cases} \deg(u) & \text{: $u=v$}\\ 0 & \text{: otherwise.}\end{cases} \]
The symmetric matrix $T$ will play an important role to connect our discrete-time quantum walk and a continuous-time quantum walk. We call $T$ the discriminant operator. 

\subsubsection{General case}\label{sect:generalcases}
Secondly, let us extend the discriminant operator in our general setting; that is, $0<p_u:=\dim(\ker(I_{\mathbb{C}^{X_u}}-C_u))<\deg(u)$. Note that in the previous Grover walk case, $\ker(I_{\mathbb{C}^{X_u}}-C_u)$ is generated by the uniform vector, so $p_u=1$ for any $u\in V$. 
Let $\tilde{V}$ be the set $\tilde{V}:=\{ (u;j)\;|\; u\in V, j\in \{1,\dots,p_u\} \}$. 
The Hilbert space $\ell^2(\tilde{V})$ is denoted by $\mathcal{V}$. 
For any $f\in \mathcal{V}$ and $u\in V$, we define 
\[f[u]:=[f(u;1),\dots,f(u;p_u)]^\top \in \mathbb{C}^{\{1,\dots,p_u\}}. \]
Set $\{\xi_{u}^{(j)}\}_{j=1}^{p_u}$ as a CONS of $\ker(I_{\mathbb{C}^{X_u}}-C_u)\subset \mathbb{C}^{X_u}\cong \mathbb{C}^{\{1,\dots,\deg(u)\}}$. 
For $u\in V$ with $X_u=\{a_1,\dots,a_{\deg(u)}\}$, 
let a $p_u\times \deg(u)$-matrix $K_u$ be 
\[ K_u:=[\; \xi_u^{(1)}\;|\; \cdots \;|\; \xi_u^{(p_u)} \;]^*=[\; w_{a_1}\;|\;\cdots\;|\; w_{a_{\deg(u)}} \;]. \]
Note that for any $a\in A$,  the vector $w_a$ belongs to $\mathbb{C}^{\{1,\dots,p_{t(a)}\}}$ and can be expressed by
\begin{equation}
\label{eq:wa}
w_a=[\;\xi_{t(a)}^{(1)}(a),\cdots,\xi_{t(a)}^{(p_{t(a)})}(a) \;]^*.  
\end{equation}
Let $\iota_u: \mathcal{A}\to \mathbb{C}^{X_u}$ such that 
\[ (\iota_u\psi)(a)=\psi(a) \]
for any $a\in X_u$. 
The adjoint of $\iota_u$ is described by 
\[ (\iota_u^*\phi)(a)=\begin{cases} \phi(a) & \text{: $a\in X_u$}\\ 0 & \text{: otherwise.} \end{cases} \]
Let $d: \mathcal{A}\to \mathcal{V}$ be the map denoted by 
\begin{equation*} 
(d\psi)(u;j)=\langle\; \xi_{u}^{(j)}, \;\iota_u\psi \;\rangle_{\mathbb{C}^{X_u}} 
\end{equation*}
for any $u\in V$ and $j=1,\dots,p_u$, which is equivalent to 
\begin{equation}\label{eq:d}
    (d\psi)[u]=K_u\;\iota_u\psi 
\end{equation} 
Let us check that its adjoint is described by 
\begin{equation}\label{eq:adjoint}
(d^*f)\;(a)= \langle\; w_a,\;f[t(a)]\; \rangle_{\mathbb{C}^{\{1,\dots,p_u\}}} \end{equation}
for any $f\in \mathcal{V}$ and $a\in A$ as follows:
\begin{align*}
    \langle  d\psi, f \rangle &= \sum_{(u;j)} f(u;j) \overline{\langle\; \xi_{u}^{(j)}, \;\iota_u\psi \;\rangle} \\
    &= \sum_{(u;j)} f(u;j)\sum_{a\in A} (\iota_u^*\xi_u^{(j)})\;(a)\;\overline{\psi(a)} \\
    &= \sum_{a\in A} \left(\sum_{j=1}^{p_{t(a)}} f(t(a);j)\; (\iota_{t(a)}^*\;\xi_{t(a)}^{(j)})\;(a)\right)\; \overline{\psi(a)} \\
    &= \sum_{a\in A} \;\langle\; w_a,\;f[t(a)]\; \rangle \; \overline{\psi(a)}. 
\end{align*}
By the expression of (\ref{eq:adjoint}), it holds that  
\begin{equation}\label{eq:d^*}
\iota_ud^*f=K^*_u\; f[u] 
\end{equation}
for any $f\in \mathcal{V}$ and $u\in V$. 
We have the following lemma
\begin{lemma}
\label{lem:coisometry}
Let $d$ and $C$ be defined as the above. Then we have 
\begin{align*}
    dd^*&= I_{\mathcal{V}}, \\
    C &= 2d^*d-I_{\mathcal{A}}.   
\end{align*}
\end{lemma}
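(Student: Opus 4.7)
The plan is to reduce both identities to the matrix identities $K_u K_u^* = I_{p_u}$ and $K_u^* K_u = \Pi_u$, which follow from orthonormality of the chosen CONS $\{\xi_u^{(j)}\}_{j=1}^{p_u}$.

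First I would verify $dd^* = I_{\mathcal{V}}$. Given $f \in \mathcal{V}$, combine (\ref{eq:d}) and (\ref{eq:d^*}) to get
\[
    (dd^*f)[u] = K_u\,\iota_u d^* f = K_u K_u^* f[u].
\]
Since the columns of $[\xi_u^{(1)}|\cdots|\xi_u^{(p_u)}]$ form an orthonormal family in $\mathbb{C}^{X_u}$, the matrix $K_u K_u^*$ is the $p_u\times p_u$ Gram matrix of this family, which equals $I_{p_u}$. Hence $(dd^*f)[u]=f[u]$ for every $u\in V$, proving the first identity.

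Next I would verify $C = 2d^*d - I_{\mathcal{A}}$. The operator $d^*d$ is block-diagonal with respect to the decomposition (\ref{eq:decom1}), because (\ref{eq:d^*}) gives
\[
    \iota_u(d^*d\,\psi) = K_u^*\,(d\psi)[u] = K_u^* K_u\,\iota_u\psi.
\]
Now $K_u^* K_u = \sum_{j=1}^{p_u} \xi_u^{(j)}(\xi_u^{(j)})^*$ is precisely the orthogonal projection onto $\ker(I_{\mathbb{C}^{X_u}}-C_u)$, which coincides with the $\Pi_u$ appearing in the representation $C_u = 2\Pi_u - I_{\mathbb{C}^{X_u}}$ (indeed, $C_u^2=I$ forces $\Pi_u$ to project onto the $+1$-eigenspace of $C_u$, which is exactly $\ker(I-C_u)$). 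Therefore $d^*d = \bigoplus_{u\in V} \Pi_u$, and
\[
    2d^*d - I_{\mathcal{A}} = \bigoplus_{u\in V} (2\Pi_u - I_{\mathbb{C}^{X_u}}) = \bigoplus_{u\in V} C_u = C.
\]

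There is no serious obstacle here; the only subtlety worth being explicit about is the identification of the abstract projection $\Pi_u$ with $K_u^* K_u$, which requires noting that $\Pi_u$ must be orthogonal (forced by $C_u^2 = I$ together with unitarity of $C_u$) and that its range is $\ker(I_{\mathbb{C}^{X_u}} - C_u)$, so any CONS of that range assembles into a matrix whose Gram-type product recovers $\Pi_u$.
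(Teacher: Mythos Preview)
Your proof is correct and follows essentially the same approach as the paper: both reduce to the block identities $K_uK_u^*=I_{p_u}$ and $K_u^*K_u=\Pi_u$ via (\ref{eq:d}) and (\ref{eq:d^*}), with the paper writing the block decomposition as $\sum_{u}\iota_u^*(\cdot)\iota_u$ where you use $\bigoplus_u$. Your explicit remark that $\Pi_u$ must be the orthogonal projection onto $\ker(I_{\mathbb{C}^{X_u}}-C_u)$ is a welcome clarification the paper leaves implicit.
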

\begin{proof}
By (\ref{eq:d}) and (\ref{eq:d^*}), we have 
\begin{align*}
    (dd^*f)[u] &= K_u\iota_u\; d^*f= K_uK_u^*\;f[u] \\
    &= f[u].
\end{align*}
Here we used $\langle  \xi_u^{(i)},\xi_u^{(j)} \rangle=\delta_{i,j}$ since $\{\xi_u^{(j)}\}_{j=1}^{p_u}$ is a CONS of $\ker(I_{\mathbb{C}^{X_u}}-C_u)$.
By (\ref{eq:d}), (\ref{eq:d^*}), we have 
\begin{align*}
    d^*d\psi&= \sum_{u\in V}\iota_u^*\iota_u (d^*d)\psi 
    = \sum_{u\in V}\iota_u^*K_u^*(d\psi)[u] 
    = \sum_{u\in V}\iota_u^* K_u^*K_u\iota_u \psi \\
    &= \sum_{u\in V}\iota_u^* \Pi_u \iota_u \psi
\end{align*}
for any $\psi\in \mathcal{A}$. 
Since $C_u=2\Pi_u-I_{\mathbb{C}^{X_u}}$, we have 
\[ C=\sum_{u\in V} \iota_u^*(2\Pi_u-I_u)\iota_u =2d^*d-I_{\mathcal{A}}. \]
\end{proof}
\begin{definition}
The discriminant operator on $\mathcal{V}$ is denoted by  
\[T=dS_od^*.  \]
We remark that this $T$ becomes a self-adjoint operator, since $S_o = S_o^*$.
\end{definition}
Let $\tilde{w}_{a}\in \mathcal{V}$ be the extension of $w_a$ such that 
\[ (\tilde{w}_a)[u]=\begin{cases} w_a & \text{: $t(a)=u$,}\\ 0 & \text{: otherwise.} \end{cases} \]
Since $d\cong \bigoplus_{u\in V}\iota_u^* K_u\iota_u$,
we have 
\[ T=  \sum_{a\in A }\tilde{w}_{\bar{a}}\;\tilde{w}_{a}^*.  \]
Let $P_u$ be the projection onto $\mathrm{span}\{\delta_{(u;j)} \;|\; j=1,\dots,p_u\}\in \mathcal{V}$. 
Then we have the matrix valued entry of $T$ for $u,v\in V$ as
\begin{equation}\label{eq:Telement}
P_v T P_u= \sum_{t(a)=v,\;o(a)=u} w_{\bar{a}}\;w_{a}^*.  
\end{equation}
This means that the weight associated to moving a walker from $u$ and $v$ is the matrix represented by $w_{\bar{a}}\;w_{a}^*$. 

\subsubsection{Examples}
In the following, let us give some examples other than the Grover walk. 
\begin{enumerate}
\item Example of $p_u=1$ case (for any $u\in V$): Let us reproduce the discrete-time quantum walk introduced in \cite{Childs2010} which is induced by an arbitrary Hamiltonian operator on $\ell^2(V)$. 
Assume the underlying graph is finite and connected. 
Let $H$ be a Hamiltonian operator of the graph, and $\mathrm{abs}(H)$ be the elementwise absolute value of $H$. Set $\lambda_{max}$ as the maximal eigenvalue of $\mathrm{abs}(H)$ and 
$\nu_{max}$ as its eigenvector. 
The unit vector assigned at vertex $u\in V$,  $\xi_u:=\xi_u^{(1)}$, is denoted by 
\[ \xi_u(a)=\frac{1}{\sqrt{\lambda_{max}}}\sqrt{\frac{(H^*)_{u,o(a)}\;\nu_{max}(o(a))}{\nu_{max}(u)}} \]
for any $a\in A$  with $t(a)=u$. 
Then it is easy to see that the discriminant operator $T$ is described by 
\[ (T)_{u,v}=(dS_od^*)_{u,v}=(H)_{u,v}/\lambda_{max}.  \]
Note that our induced discrete-time quantum walk with the parameter $\epsilon$, namely the discrete-time $\epsilon$-Szegedy walk, can approximate the continuous-time quantum walk driven by $H/||\mathrm{abs}(H)||$.  
On the other hand, the discrete-time quantum walk in \cite{Childs2010} to approximates the continuous-time quantum walk is not the same as our discrete-time quantum walk model, which is induced by the lazy random walk with the  transition probability from $o(a)$ to $t(a)$ for any $a\in A\cup V$ as follows: 
\[ \begin{cases}
\epsilon \;\xi_{o(a)}^2(a) & \text{: $a\in A$,}\\ 1-\epsilon & \text{: $a\in V$.} \end{cases}\]
Thus the total Hilbert space in \cite{Childs2010} should be enlarged as $\ell^2(A\cup V)$ because the self-loop is added to every vertex by the underlying lazy random walk. 

\item Example of $p_u=2$ case (for any $u\in V$):  The underlying graph is set as the $3$-dimensional lattice. 
The arc whose terminal vertex is $\bs{x}=(x_1,x_2,x_3)\in \mathbb{Z}^3$ and origin vertex is $\boldsymbol{x}\mp \boldsymbol{e}_j$ is denoted by $(\boldsymbol{x};\pm j)$ $(j=1,2,3)$. 
Here $\bs{e}_1=(1,0,0)$, $\bs{e}_2=(0,1,0)$ and $\bs{e}_3=(0,0,1)$. 
Set $X_{\bs{x}}:=\{ (\bs{x};j)\;|\;j=\pm 1, \pm 2, \pm 3 \}$. 
For any $\bs{x}\in \mathbb{Z}^3$, 
the standard basis set of $\mathbb{C}^{X_{\bs{x}}}$ is denoted by 
\[ \{ \delta_{(\bs{x};+1)}, \delta_{(\bs{x};-1)},\delta_{(\bs{x};+2)}, \delta_{(\bs{x};-2)},\delta_{(\bs{x};+3)},\delta_{(\bs{x};-3)}\}. \]
Let us set $\xi_{\bs{x}}^{(1)}$,  $\xi_{\bs{x}}^{(2)}\in \mathbb{C}^{X_{\bs{x}}}$ by 
\[ \xi_{\bs{x}}^{(1)}=\frac{1}{\sqrt{6}}[1\;1\;\omega\;\omega\;\omega^2\;\omega^2]^\top,\; \xi_{\bs{x}}^{(1)}=\frac{1}{\sqrt{6}}[1\;1\;\omega^2\;\omega^2\;\omega\;\omega]^\top, \]
where $\omega=e^{2\pi i/3}$. 
The resulting quantum coin at $\bs{x}\in \mathbb{Z}$ is described by
\[ C_{\bs{x}}=2 (\xi_{\bs{x}}^{(1)}{\xi_{\bs{x}}^{(1)}}^*+\xi_{\bs{x}}^{(2)}{\xi_{\bs{x}}^{(2)}}^*)-I_{\mathbb{C}^{X_{\bs{x}}}}. \] 
Let us put $\sigma: \mathbb{C}^{X_{\boldsymbol{x}}}\to \mathbb{C}^{X_{\boldsymbol{x}}}$ as  the permutation matrix of the transposition $(\boldsymbol{x};j)\mapsto (\boldsymbol{x};-j)$ $(j=\pm 1,\pm 2,\pm 3)$ and $\mathrm{Gr}(k)$ is the $k$-dimensional Grover matrix, that is, $\mathrm{Gr}(k)=(2/k)\; J_k-I_k$, where $J_k$ is the all $1$ matrix. 
Then the coin matrix is equivalent to 
\[C_{\bs{x}}=-\sigma \mathrm{Gr}(6).\]
 
Let us see that this quantum walk driven by $C_{\bs{x}}$ is essentially same as the Grover walk with the {\it moving shift} on $\mathbb{Z}^3$ in the following. 
The moving shift operator $S_m$ is defined by 
$(S_m\psi)(\bs{x};j)=\psi(\bs{x}-\bs{e}_j;j)$ for any $j\in\{\pm 1,\pm 2,\pm 3\}$ and $\bs{x}\in \mathbb{Z}^3$. On the other hand, $(S_o\psi)(\bs{x};j)=\psi(\bs{x}-j;-j)$. 
It is easy to see that $(S_oS_m\psi)(\bs{x};j)=\psi(\bs{x};-j)$, which is local. 
Then the discrete-time quantum walk with the moving shift operator is expressed by that of the flip flop shift type such that  
\[U_m:=S_mC=S_oS_oS_mC=S_oC',\] 
where $C'$ is the directsum of $\sigma \mathrm{Gr}(6)$ over all the vertices $\bs{x}\in \mathbb{Z}^3$.

By (\ref{eq:wa}), $w(a)$'s are computed by  
\[ w_{(\bs{x};\pm 1)}=\frac{1}{\sqrt{6}}\begin{bmatrix}1\\1\end{bmatrix},\; 
 w_{(\bs{x};\pm 2)}=\frac{1}{\sqrt{6}}\begin{bmatrix}\omega^2\\\omega\end{bmatrix},\; 
  w_{(\bs{x};\pm 3)}=\frac{1}{\sqrt{6}}\begin{bmatrix}\omega\\\omega^2\end{bmatrix}.  
 \]
Then by (\ref{eq:Telement}), 
the discriminant operator $T: \ell^2(\mathbb{Z};\mathbb{C}^2)\to \ell^2(\mathbb{Z};\mathbb{C}^2)$ is described by
\begin{multline*} (Tf)(\bs{x})=
W_1f(\bs{x}-\bs{e}_1)+W_{-1}f(\bs{x}+\bs{e}_1)\\
+W_2f(\bs{x}-\bs{e}_2)+W_{-2}f(\bs{x}+\bs{e}_2)\\
+W_3f(\bs{x}-\bs{e}_3)+W_{-3}f(\bs{x}+\bs{e}_3),
\end{multline*}
for any $f\in \ell^2(\mathbb{Z}^2;\mathbb{C}^2)$ and $\bs{x}\in \mathbb{Z}^3$, 
where
\[ W_1=W_{-1}=\frac{1}{6}\begin{bmatrix}1&1\\1&1\end{bmatrix},\;
W_2=W_{-2}=\frac{1}{6}\begin{bmatrix}1&\omega\\\omega^2 &1\end{bmatrix},\;
W_3=W_{-3}=\frac{1}{6}\begin{bmatrix}1&\omega^2\\\omega &1\end{bmatrix}.
\]
The matrices $W_{\pm 1}$, $W_{\pm 2}$ and $W_{\pm 3}$ are the weights associated with moving to $\pm \bs{e}_1$, $\pm \bs{e}_2$ and $\pm \bs{e}_3$, respectively.
\end{enumerate}
\subsection{Continuous-time Szegedy walk}
The following operator $H$ on $\mathcal{A}$ is a self adjoint because the operators $S_o$ and $C$ are selfadjoint.   
\begin{align}
    H:=\frac{1}{2}(S_o+CS_oC).
    \label{eq:hamiltonian}
\end{align}  
Then we define the time evolution of the continuous-time quantum walk on the underlying graph $G$ treated here is defined on $\mathcal{A}$ by   
\begin{equation}\label{eq:CTQW}
-i\frac{\partial}{\partial t}\phi_t = H\phi_t. 
\end{equation}
Note that we can also express it in $H = \frac{1}{2}C(U_o + U_o^*)=\frac{1}{2}(U_o + U_o^*)C$, where $U_o$ is a unitary operator on $\mathcal{A}$ defined by $U_o = S_oC$.
We call this continuous-time quantum walk as continuous-time Szegedy walk.
\begin{corollary}
\label{cor:radius_of_H}
For the above $H$, the following holds.
\[
    \|H\|\leq 1.
\]
\end{corollary}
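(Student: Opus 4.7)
The plan is to bound $\|H\|$ by the operator triangle inequality after observing that each summand in the defining expression of $H$ is a product of unitary operators and hence has norm one. The key ingredients are already in hand: the shift $S_o$ is a unitary involution (so $\|S_o\|=1$) and the coin $C$ is a unitary involution as well (so $\|C\|=1$); both facts follow from $S_o^2=C^2=I_{\mathcal{A}}$ combined with self-adjointness.

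Given this, I would first write
\[
\|H\| = \left\|\tfrac{1}{2}(S_o+CS_oC)\right\| \leq \tfrac{1}{2}\bigl(\|S_o\|+\|CS_oC\|\bigr),
\]
and then bound $\|CS_oC\|\leq \|C\|\,\|S_o\|\,\|C\| = 1$ by sub-multiplicativity of the operator norm, giving $\|H\|\leq 1$. Equivalently, one may start from the alternative expression $H=\tfrac{1}{2}C(U_o+U_o^*)$ with $U_o=S_oC$ unitary, and conclude $\|H\|\leq \tfrac{1}{2}\|C\|(\|U_o\|+\|U_o^*\|)=1$; this is slightly more elegant since it pins down the bound through a single unitary $U_o$ and its adjoint.

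There is no real obstacle here: the only thing to be careful about is to make sure the assumptions $S_o^*=S_o$, $C^*=C$, $S_o^2=I$, $C^2=I$ have actually been established (they have, in the definitions preceding the corollary), so that both operators are indeed unitary and the norm estimate is tight at $1$. No spectral or Hilbert-space-dimension subtleties enter, and the statement holds uniformly whether $V$ is finite or infinite since bounded degree guarantees $C$ and $S_o$ are bounded on all of $\mathcal{A}$.
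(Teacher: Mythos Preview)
Your proof is correct and essentially matches the paper's own argument: the paper writes $\|H\Psi\|^2=\tfrac{1}{4}\langle C\Psi,(U_o+U_o^*)^2 C\Psi\rangle\le \tfrac{1}{4}(\|U_o\|+\|U_o^*\|)^2\|C\Psi\|^2=\|\Psi\|^2$, which is exactly your second (``more elegant'') route via $H=\tfrac{1}{2}C(U_o+U_o^*)$ with $U_o$ and $C$ unitary. Your first route via the triangle inequality on $\tfrac{1}{2}(S_o+CS_oC)$ is an equally valid and slightly more direct variant of the same idea.
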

\begin{proof}
It follows from the following calculation for any $\Psi\in\HC$.
\begin{align*}
    \|H\Psi\|^2 = \frac{1}{4}\inpr{C\Psi}{(U_o+U_o^*)^2C\Psi}
    \leq \frac{1}{4}(\|U_o\| + \|U_o^*\|)^2\|C\Psi\|^2 = \|\Psi\|^2.
\end{align*}
\end{proof}

Let $\mathcal{I}\subset \mathcal{A}$ be defined by $\mathcal{I}:=d^*\mathcal{V}+S_od^*\mathcal{V}\subset \mathcal{A}$ and called the inherited subspace.
Then, the orthogonal complement of this subspace $\mathcal{B}:=\mathcal{I}^{\perp}\subset\mathcal{A}$ is called the birth subspace.
If $\Psi\in\mathcal{B}$, then $\Psi$ satisfies the following equation for any $f_1,f_2\in\mathcal{V}$:
\[
    \inpr{\Psi}{d^*f_1 + S_od^*f_2} = \inpr{d\Psi}{f_1} + \inpr{dS_o\Psi}{f_2}=0.
\]
It gives $\mathcal{B} = \ker(d)\cap\ker(dS_o)$.
The following lemma guarantees that $\mathcal{I}$ and $\mathcal{B}$ will be invariant subspaces of $H$.
\begin{lemma}
\label{lem:invariant}
For $\mathcal{I}$ and $\mathcal{B}$ defined as above, these are also invariant subspace of $H$, that is both $H\mathcal{I}\subset \mathcal{I}$ and $H\mathcal{B}\subset \mathcal{B}$ hold.
\end{lemma}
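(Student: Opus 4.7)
The plan is to exploit the coisometry identities from Lemma~\ref{lem:coisometry}, namely $dd^* = I_{\mathcal{V}}$ and $C = 2d^*d - I_{\mathcal{A}}$, which immediately yield the two algebraic identities
\[
Cd^* = (2d^*d - I)d^* = 2d^*(dd^*) - d^* = d^*, \qquad dC = d.
\]
These say that $d^*\mathcal{V}$ lies in the $+1$-eigenspace of $C$ and $\ker(d)$ coincides with the $-1$-eigenspace of $C$. With these in hand, both invariance statements reduce to short computations.

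For $H\mathcal{I}\subset\mathcal{I}$, I would evaluate $H$ separately on the two generating subspaces $d^*\mathcal{V}$ and $S_od^*\mathcal{V}$. Using $Cd^* = d^*$ together with the definition $T=dS_od^*$, one computes
\[
CS_od^*f = (2d^*d - I)S_od^*f = 2d^*Tf - S_od^*f,
\]
from which
\[
Hd^*f = \tfrac{1}{2}\bigl(S_od^*f + CS_oCd^*f\bigr) = \tfrac{1}{2}\bigl(S_od^*f + CS_od^*f\bigr) = d^*Tf,
\]
and applying $CS_o$ once more,
\[
HS_od^*f = \tfrac{1}{2}\bigl(d^*f + CS_oCS_od^*f\bigr) = 2d^*T^2f - S_od^*Tf.
\]
Both outputs manifestly belong to $d^*\mathcal{V} + S_od^*\mathcal{V} = \mathcal{I}$, so $\mathcal{I}$ is $H$-invariant.

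For $H\mathcal{B}\subset\mathcal{B}$, the defining conditions $d\psi=0$ and $dS_o\psi=0$ translate, via $C = 2d^*d - I$, into $C\psi = -\psi$ and $CS_o\psi = -S_o\psi$. Plugging into the expression for $H$,
\[
H\psi = \tfrac{1}{2}\bigl(S_o\psi + CS_oC\psi\bigr) = \tfrac{1}{2}\bigl(S_o\psi - CS_o\psi\bigr) = \tfrac{1}{2}\bigl(S_o\psi + S_o\psi\bigr) = S_o\psi.
\]
So $H$ acts as $S_o$ on $\mathcal{B}$, and $S_o\psi\in\mathcal{B}$ is immediate from $S_o^2 = I$: indeed $d(S_o\psi)=dS_o\psi=0$ and $dS_o(S_o\psi)=d\psi=0$.

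I do not foresee a real obstacle; the work is purely algebraic once one has the identities $Cd^* = d^*$ and $dC = d$. As a sanity check, the invariance of $\mathcal{B}$ could alternatively be deduced for free from the invariance of $\mathcal{I}$, since $H$ is self-adjoint and $\mathcal{B} = \mathcal{I}^{\perp}$; but the direct computation above is cleaner and has the bonus of showing that $H|_{\mathcal{B}} = S_o|_{\mathcal{B}}$, which will likely be useful for the spectral analysis in the next section.
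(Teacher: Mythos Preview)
Your proof is correct and, for the invariance of $\mathcal{I}$, reproduces exactly the paper's computations (equations \eqref{eq:Hd^*} and \eqref{eq:HSd^*}). For $\mathcal{B}$ the paper instead verifies the dual identities $dH = Td$ and $dS_oH = TdS_o$, but your direct computation $H\psi = S_o\psi$ is equally short and has the advantage of yielding $H|_{\mathcal{B}} = S_o|_{\mathcal{B}}$ explicitly, a fact the paper rederives later in the proofs of Theorem~\ref{thm:main} and Lemma~\ref{lem:ev_birth}.
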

\begin{proof}
Multiplying $d^*$ and $S_od^*$ from the right side for $H$, the following equations hold since $Cd^*=d^*$ and $S_o^2=I_{\mathcal{A}}$.
\begin{align}
    \label{eq:Hd^*}
    Hd^* &= \frac{S_od^* + CS_od^*}{2} = \frac{(I_{\mathcal{A}}+C)S_od^*}{2} = d^*T,
    \\
    \label{eq:HSd^*}
    HS_od^* &= \frac{d^* + CS_oCS_od^*}{2} 
    =\frac{d^*+(2d^*d-I_{\mathcal{A}})S_o(2d^*d-I_{\mathcal{A}})S_od^*}{2} 
    =2d^*T^2-S_od^*T.
\end{align}
For any $\Psi\in\mathcal{I}$ written by $\Psi = d^*f_1 + S_od^*f_2$ with $f_1,f_2\in\mathcal{V}$, above equations give that
\[
    H(d^*f_1+S_od^*f_2)
    =d^*(Tf_1+2T^2f_2)-S_od^*Tf_2\in\mathcal{I}.
\]
Thus, $H\mathcal{I}\subset \mathcal{I}$ holds.
Next, $H\mathcal{B}\subset \mathcal{B}$ can be shown immediately by $\mathcal{B}=\ker(d)\cap\ker(dS_o)$ and the following equations:
\begin{align}
\label{eq:dH}
dH &=\frac{1}{2} d(CS_o+S_oC) = \frac{1}{2}(dS_o+dS_o(2d^*d-I_{\mathcal{A}}))= Td,
\\
\nonumber
dS_oH &= \frac{1}{2}d(S_oCS_o + C) = \frac{1}{2}(dS_o(2d^*d-I_{\mathcal{A}})S_o + d)=TdS_o.
\end{align}
\end{proof}

This continuous-time Szegedy walk on the Hilbert space generated by $A$ can reproduce a typical continuous-time quantum walk (discrete-Schr{\"o}dinger equation) on $\mathcal{V}$ driven by the  Hamiltonian $T$ as follows.  
\begin{proposition}\label{prop:DiscSch}
Let us consider the following discrete-Schr{\"o}dinger equation on $\mathcal{V}$: 
\[ -i\frac{\partial}{\partial t}f_t=T f_t,\;f_0=g. \]
Then the solution $f_t$ is equivalent to $d\phi_t$, where $\phi_t$ is the solution of the following Schor{\"o}dinger equation on $\mathcal{A}$:  
\[ -i\frac{\partial}{\partial t}\phi_t=H \phi_t,\;\phi_0=d^*g. \]
\end{proposition}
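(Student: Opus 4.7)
The plan is to verify that $d\phi_t$ satisfies the same initial value problem as $f_t$ on $\mathcal{V}$, and then invoke uniqueness of solutions to the Schrödinger equation driven by the bounded self-adjoint operator $T$. Concretely, I will set $\tilde{f}_t := d\phi_t$ and show $\tilde{f}_0 = g$ together with $-i\partial_t \tilde{f}_t = T\tilde{f}_t$; since $T = dS_od^*$ is bounded and self-adjoint on $\mathcal{V}$, the solution of the ODE with initial datum $g$ is unique, so $\tilde{f}_t = f_t$ follows.

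The initial condition is the content of Lemma~\ref{lem:coisometry}: since $dd^* = I_\mathcal{V}$, we have
\[
\tilde{f}_0 = d\phi_0 = d(d^*g) = g.
\]
For the differential equation, I apply the bounded operator $d$ to both sides of $-i\partial_t \phi_t = H\phi_t$ (so $d$ commutes with the time derivative), giving
\[
-i\frac{\partial}{\partial t}\tilde{f}_t = -i\,d\frac{\partial}{\partial t}\phi_t = dH\phi_t.
\]
Now the intertwining identity $dH = Td$ has already been established in equation~(\ref{eq:dH}) of the proof of Lemma~\ref{lem:invariant} (it uses only $C = 2d^*d - I_\mathcal{A}$ together with $S_o^2 = I_\mathcal{A}$). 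Therefore
\[
-i\frac{\partial}{\partial t}\tilde{f}_t = dH\phi_t = Td\phi_t = T\tilde{f}_t,
\]
so $\tilde{f}_t$ indeed solves the discrete Schrödinger equation driven by $T$ with the correct initial state.

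I expect no substantive obstacle: the result is essentially a one-line consequence of the two structural identities $dd^* = I_\mathcal{V}$ and $dH = Td$ that are already available from Lemma~\ref{lem:coisometry} and Lemma~\ref{lem:invariant}. The only point requiring a brief remark is the well-posedness of both evolutions, which follows from $T$ being bounded self-adjoint on $\mathcal{V}$ (symmetry is noted just after the definition of the discriminant operator, boundedness follows from the uniform degree bound on $G$) and $H$ being bounded self-adjoint on $\mathcal{A}$ by Corollary~\ref{cor:radius_of_H}; this ensures that $\phi_t = e^{itH}d^*g$ and $f_t = e^{itT}g$ exist uniquely, so the computation above closes the argument.
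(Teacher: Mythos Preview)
Your proof is correct and follows essentially the same route as the paper: both arguments hinge on $dd^* = I_{\mathcal{V}}$ from Lemma~\ref{lem:coisometry} for the initial condition and the intertwining identity $dH = Td$ from~\eqref{eq:dH} to transfer the evolution. The only cosmetic difference is that the paper uses the adjoint form $Hd^* = d^*T$ to first write $\phi_t = d^* f_t$ and then compute $dHd^* = T$, whereas you apply $dH = Td$ directly and close with an explicit appeal to uniqueness; the content is the same.
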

This means that the Schor{\"o}dinger equation on $\mathcal{A}$ driven by $H$ can reproduce that on $\mathcal{V}$ driven by $T$. 
\begin{proof}
From \eqref{eq:dH}, we have $Hd^* = d^*T$.
Thus $\phi_t\in \mathcal{A}$ is expressed by some $f_t\in \mathcal{V}$ such that $\phi_t=d^*f_t$, which is equivalent to $d\phi_t=f_t$ since $dd^*=I_{\mathcal{V}}$.   
Then we have  
\begin{align*}
-i\frac{\partial}{\partial t}f_t=
-i\frac{\partial}{\partial t} d\phi_t &= dH d^* f_t=Tf_t. 
\end{align*}
This completes the proof. 
\end{proof}



We note that, subspaces $\mathcal{I}$ and $\mathcal{B}$ are also invariant subspaces of our discrete-time quantum walk, that is both $U(\epsilon)\mathcal{I}\subset \mathcal{I}$ and $U(\epsilon)\mathcal{B}\subset \mathcal{B}$ hold.
This property plays very important role for spectral analysis of discrete-time quantum walks.

As we will see in Section~\ref{sect:CTQW_Spec}, the continuous-time quantum walk has the same eigensapce restricted to $\mathcal{B}$ as that of discrete-time quantum walk $U_o=S_oC$. 
Then, for example, if the local quantum coin $C_u$ is given by the Grover matrix, then the eigenspace restricted to $\mathcal{B}$ is generated by the cycle and path information~\cite{HPSS}.
This eigenstate works as the dark state. For examples, once a closed cycle exists in the underlying graph, so called the localization also occurs in this continuous-time quantum walk if the graph is infinite~\cite{HKSS}, while the non-zero survival probability of this continuous-time quantum walk on a finite graph with sink can be observed~\cite{MNSJ}. 

\section{Connecting the discrete and continuous-time Szegedy walks: proof of Theorem~\ref{thm:main}}\label{sect:connectionCD}
By setting a sufficiently small parameter $\epsilon$, the amplitude of our discrete-time quantum walk to the neighbors is quite small. 
Such a dynamics seems to be like a continuous-time random walk whose moving follows Poisson's bell. Indeed, we obtain the following proposition which gives the connection from our discrete-time quantum walk to the corresponding  continuous-time quantum walk. 
\begin{proposition}
\label{thm:DTtoCT1}
Let us change the time interval of the discrete-time quantum walk (\ref{eq:DTQW}) by $\epsilon$, such that $\psi_{\tau+\epsilon}=U(\epsilon/2)^2\psi_\tau$ for $\tau\in\{0,\epsilon,2\epsilon,\dots\}$. 
Then the discrete-time quantum walk (\ref{eq:DTQW}) is a difference approximation of the continuous-time quantum walk (\ref{eq:CTQW}) in the following meaning:  
\[ -i\frac{\psi_{\tau+\epsilon}-\psi_\tau}{\epsilon}= H\psi_\tau+O(\epsilon) \]
for sufficiently small $\epsilon$. 
\end{proposition}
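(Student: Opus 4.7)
The plan is to Taylor expand the one-step operator $U(\epsilon/2)^2$ in powers of $\epsilon$ and verify that its linear term equals exactly $i\epsilon H$, with the remainder controlled in operator norm. First I would write $U(\epsilon/2)=\sqrt{1-\epsilon^2/4}\,C+i(\epsilon/2)\,S_oC$ by substituting the definition of $S(\epsilon)$ into $U(\epsilon)=S(\epsilon)C$. Squaring directly and using the involutions $C^2=I_{\mathcal{A}}$ and $S_o^2=I_{\mathcal{A}}$ collapses the cross terms and gives
\[
U(\epsilon/2)^2=\lr{1-\tfrac{\epsilon^2}{4}}I_{\mathcal{A}}+i\tfrac{\epsilon}{2}\sqrt{1-\tfrac{\epsilon^2}{4}}\,(S_o+CS_oC)-\tfrac{\epsilon^2}{4}(S_oC)^2.
\]
Taylor expanding $\sqrt{1-\epsilon^2/4}=1+O(\epsilon^2)$ and recognising $(S_o+CS_oC)/2=H$ from \eqref{eq:hamiltonian}, the linear-in-$\epsilon$ contribution is precisely $i\epsilon H$, while all remaining pieces are $O(\epsilon^2)$.

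The next step is to read off the remainder as a bounded operator. Since $S_o$, $C$ and $S_oC$ are all unitary, there is an absolute constant $K$ such that
\[
U(\epsilon/2)^2=I_{\mathcal{A}}+i\epsilon H+R(\epsilon),\qquad \|R(\epsilon)\|\leq K\epsilon^2.
\]
Substituting $\psi_{\tau+\epsilon}=U(\epsilon/2)^2\psi_\tau$ into the difference quotient and multiplying by $-i$ yields $-i(\psi_{\tau+\epsilon}-\psi_\tau)/\epsilon=H\psi_\tau-iR(\epsilon)\psi_\tau/\epsilon$, and the error term has norm at most $K\epsilon\,\|\psi_\tau\|$. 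Because $U(\epsilon/2)^2$ is unitary the norm $\|\psi_\tau\|$ is conserved in $\tau$, so the $O(\epsilon)$ bound holds uniformly in $\tau$, which is the sense in which the proposition is stated.

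I do not anticipate any serious obstacle; the argument is essentially a first-order Taylor expansion. The only point that requires care is that $H$ appears in the linear term with coefficient exactly $1$ rather than $1+O(\epsilon^2)$: this is forced by the fact that the prefactor $\sqrt{1-\epsilon^2/4}$ multiplies a quantity already of order $\epsilon$, so its subleading correction contributes only at order $\epsilon^3$ to $U(\epsilon/2)^2$ and is absorbed cleanly into $R(\epsilon)$. It is also worth observing that the choice to iterate by $U(\epsilon/2)^2$ rather than by $U(\epsilon)$ is precisely what makes the $CS_oC$ term pair symmetrically with $S_o$ and produce the self-adjoint operator $H$ defined in \eqref{eq:hamiltonian}, rather than a non-Hermitian surrogate; this is the structural reason the proposition holds.
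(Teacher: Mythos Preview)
Your proposal is correct and follows essentially the same approach as the paper: a Taylor expansion of $U(\epsilon/2)^2$ in powers of $\epsilon$, using the involutions $C^2=I_{\mathcal{A}}$ and $S_o^2=I_{\mathcal{A}}$ to identify the linear term as $i\epsilon H$. The only cosmetic difference is that the paper first expands $S(\epsilon/2)=I_{\mathcal{A}}+i(\epsilon/2)S_o+O(\epsilon^2)$ and then squares $U$, whereas you compute $U(\epsilon/2)^2$ exactly before expanding; your version has the mild advantage of exhibiting the remainder explicitly and noting the uniform-in-$\tau$ bound via unitarity.
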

\begin{proof}
The shift operator $S(\epsilon/2)$ can be written by
\[ S(\epsilon/2)=\sqrt{1-(\epsilon/2)^2}I_{\mathcal{A}} + i\frac{\epsilon}{2}S_o = I_{\mathcal{A}} + i\frac{\epsilon}{2} S_o+O(\epsilon^2)\] 
for small $\epsilon \ll 1$. 
Noting $C^2=I$, we have 
\begin{align*} 
U^2(\epsilon/2) &= I_{\mathcal{A}} + i \frac{\epsilon}{2}(S_o + CS_oC) +O(\epsilon^2). 
\end{align*}
Since $\psi_{\tau+\epsilon} = U^2(\epsilon /2)\psi_\tau$, we have 
\[ -i\frac{\psi_{\tau+\epsilon}-\psi_\tau}{\epsilon}= H\psi_\tau+O(\epsilon). \]
\end{proof}
This proposition suggests that our discrete-time quantum walk on $G$ can approximate the continuous-time quantum walk on the same graph. To see the way to completely identify with the continuous-time quantum walk, let us show again our main theorem displayed in Section~1: 
\begin{reptheorem}{thm:main}
Let $G=(V,A)$ be the underlying graph which is simple and connected. 
Let $\psi_{N}^{[D,\epsilon]}$ be the state at the final time $N$ of the discrete-time quantum walk  (\ref{eq:DTQW}) with the time evolution operator $U(\epsilon/2)^2$ and the initial state $\varphi_0\in \mathcal{A}$, that is, 
\[ \psi_0^{[D,\epsilon]}=\varphi_0;\;\;\psi_{n}^{[D,\epsilon]}=U^2(\epsilon/2)\;\psi_{n-1}^{[D,\epsilon]} \;\text{ $(n=1,\dots,N)$}, \]
while $\phi_{t}^{[C]}$ be the $t$-th iternation of the continuous-time quantum walk (\ref{eq:CTQW}) with the Hermitian operator $H$ and the same initial state $\varphi_0$, that is,
\[ \phi_0^{[C]}=\varphi_0;\;\; -i\frac{\partial}{\partial t}\phi_t^{[C]} =H\phi_t^{[C]} \;\text{ $(t>0)$}. \]
Then we have 
\begin{equation*}
\phi_t^{[C]}(a) = \lim_{N\to\infty} \psi^{[D,\;t/N]}_N(a)
\end{equation*}
for any $a\in A$ and $t>0$.
\end{reptheorem}
Now let us give the proof of Theorem~\ref{thm:main}. 
\begin{proof}
Let a linear operator $L:\mathcal{V}^2\to\mathcal{A}$ by 
\begin{align}
        \label{eq:L}
    L=[d^*,\ S_od^*], 
\end{align}
and an operator matrix $\tilde T$ on $\mathcal{V}^2$ as follows:
\begin{align}
\label{eq:tilT}
\tilde{T}=\begin{bmatrix}
T & 2T^2 \\ O & -T
\end{bmatrix}.
\end{align}
Note that the range of operator $L$ equals to $\mathcal{I}$.
Here, equations \eqref{eq:Hd^*} \eqref{eq:HSd^*} give the following key relation:
\begin{align}
\label{eq:HL=LtilT}
    HL = L\tilde{T}.
\end{align}
It says that the continuous-time quantum walk with initial state $\varphi_I=Lf_I\in\mathcal{I},\ f_I\in\mathcal{V}^2$ is written as follows:
\begin{align*}
    e^{itH}\varphi_I = L\, \exp{it \tilde{T}}f_I.
\end{align*}
Moreover, for $\varphi_B\in\mathcal{B}$, we can check that
\[
    H\varphi_B = \frac{1}{2}(S_o+CS_o(2d^*d-I_{\mathcal{A}}))\varphi_B
    =
    \frac{1}{2}(I_{\mathcal{A}} - C)S_o\varphi_B
    =
    (I_{\mathcal{A}} - d^*d)S_o\Psi_B = S_o\varphi_B,
\]
so the following holds.
\begin{align*}
    e^{itH}\varphi_B = \exp{itS_o}\varphi_B.
\end{align*}
Therefore, $t$-th iteration of the continuous-time quantum walk $\phi_t^{[C]}$ with initial state $\varphi_0 = \varphi_I + \varphi_B$ is given as follows.
\begin{align}
\label{eq:MainThmCT}
    \phi_t^{[C]} = L\, \exp{it \tilde{T}}f_I + \exp{itS_o}\varphi_B.
\end{align}
Next, we consider the discrete-time quantum walk.
Put $p=\epsilon/2$ and $q=\sqrt{1-(\epsilon/2)^2}$.
Then, we have
\begin{align*}
    U(p)d^* &= (qI_{\mathcal{A}} + ipS_o)(2d^*d - I_{\mathcal{A}})d^* = qd^* + ipS_od^*,
    \\
    U(p)S_od^* &= (qI_{\mathcal{A}} + ipS_o)(2d^*d - I_{\mathcal{A}})S_od^* = d^*(2qT-ip)+S_od^*(-q+2ipT).
\end{align*}
Thus two-times iteration of $U(p)$ is written as follows.
\begin{align*}
    U(p)^2 d^* &= (1 + 2ipq T)d^* -2p^2TS_od^*,
    \\
    U(p)^2S_od^* &= (4ipqT^2 +2p^2T)d^* + (1 -2ipqT)S_od^*.
\end{align*}
It says that the time evolution of discrete-time quantum walk with initial state $\varphi_I$ is calculated by
$U(p)^2\varphi_I =L\tilde T_I(p)f_I$, where
\[
    \tilde T_I(p) = 
    I_{\mathcal{V}^2} + 2ipq\begin{bmatrix}
        T & 2T^2 \\ O & -T
    \end{bmatrix}
    -2p^2\begin{bmatrix}
        O & -T \\ T & O
    \end{bmatrix}
    =
    I_{\mathcal{V}^2} + 2ipq \tilde T
    -2p^2\begin{bmatrix}
        O & -T \\ T & O
    \end{bmatrix}.
\]
Similarly, we have
\[
    U(p)\varphi_B = (qI_{\mathcal{A}} + ipS_o)(2d^*d - I_{\mathcal{A}})\varphi_B
    =
    -(qI_{\mathcal{A}} + ipS_o)\varphi_B,
\]
and $U^2(p)\varphi_B =\tilde T_B(p)\varphi_B$
with
\[
    \tilde T_B(p) = (qI_{\mathcal{A}} + ipS_o)^2 = I_{\mathcal{A}} + 2ipq S_o -2p^2I_{\mathcal{A}}.
\]
Therefore the state at the final time $N$ of the dicrete-time quantum walk with parameter $\epsilon = 2p =t/N$ and initial state $\varphi_0$ is given as follows.
\begin{align}
\label{eq:MainThmDT}
    \psi_N^{[D, t/N]} =U(\tfrac{t}{2N})^{2N}(\varphi_I + \varphi_B) = L\, \tilde T_I\left(\tfrac{t}{2N}\right)^N f_I + \tilde T_B\left(\tfrac{t}{2N}\right)^N\varphi_B.
\end{align}
Since $\mathcal{I}$ and $\mathcal{B}$ are both invariant subspace of $H$ and $U(\epsilon)$, equations \eqref{eq:MainThmCT} \eqref{eq:MainThmDT} say that it is sufficient to show the convergence of two operators.
Firstly, we show that $\lim_{N\to\infty}\tilde T_I\left(\tfrac{t}{2N}\right)^N = \exp{it\tilde T}$.
Note that the spectral radius of $\tilde T_I\left(\tfrac{t}{2N}\right)$ can almost be regarded as $1$ when $N$ is sufficiently large, so we have a Taylor expansion of $\log(\tilde T_I\left(\tfrac{t}{2N}\right))$ and get the following result with some boundary operators $\Theta_I$ and $\Theta_I'$.
\begin{align*}
    \lim_{N\to\infty}\tilde T_I\left(\tfrac{t}{2N}\right)^N &=
    \lim_{N\to\infty}
    \exp{N\log\left( I_{\mathcal{V}^2} + 2i\left(\tfrac{t}{2N}\right)\sqrt{1-\left(\tfrac{t}{2N}\right)^2} \tilde T
    -2
    \left(\tfrac{t}{2N}\right)^2
    \begin{bmatrix}
        O & -T \\ T & O
    \end{bmatrix}\right)}
    \\
    &= \lim_{N\to\infty}
    \exp{N\log\left( I_{\mathcal{V}^2} + 2i\left(\tfrac{t}{2N}\right)\tilde T + \left(\tfrac{t}{2N}\right)^2
    \Theta_{I}'
    \right)}
    \\
    &= \lim_{N\to\infty}
    \exp{N\left(i\tfrac{t}{N}\tilde T + \left(\tfrac{t}{2N}\right)^2
    \Theta_{I}
    \right)}
    \\
    &=
    \exp{it\tilde T}.
\end{align*}
By using the same technique, we can also show that $\lim_{N\to\infty}\tilde T_B\left(\tfrac{t}{2N}\right)^N = \exp{it S_o}$ as follows.
\begin{align*}
 \lim_{N\to\infty}\tilde T_B\left(\tfrac{t}{2N}\right)^N 
 &= \lim_{N\to\infty} \exp{N \log\left(I_{\mathcal{A}} + 2i\left(\tfrac{t}{2N}\right)\sqrt{1-\left(\tfrac{t}{2N}\right)^2}S_o -2 \left(\tfrac{t}{2N}\right)^2 I_{\mathcal{A}} \right)}
 \\
 &= \lim_{N\to\infty} \exp{N \log\left(I_{\mathcal{A}} + 2i\left(\tfrac{t}{2N}\right)S_o +
 \left(\tfrac{t}{2N}\right)^2
 \Theta_B'
 \right)}
 \\
 &= \lim_{N\to\infty} \exp{
N\left(
i\tfrac{t}{N}S_o + 
 \left(\tfrac{t}{2N}\right)^2
 \Theta_B
\right)
 }
 \\
 &=
 \exp{itS_o}.
\end{align*}
Thus the proof is completed.
\end{proof}
\begin{theorem}
\label{thm:order}
For any $t>0$, the following holds. 
\[
\|e^{itH} - U(\tfrac{t}{2N})^{2N}\| = \mathcal{O}(\tfrac{1}{N}).
\]
\end{theorem}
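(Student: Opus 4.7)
The plan is to reduce the $N$-iterate estimate to a one-step approximation bound of order $1/N^{2}$ via the standard telescoping identity
\begin{equation*}
A^{N}-B^{N} = \sum_{k=0}^{N-1} A^{k}(A-B)B^{N-1-k}
\end{equation*}
applied with $A:=U(p)^{2}$ and $B:=e^{2ipH}$, where $p:=t/(2N)$. Both $A$ and $B$ are unitary ($B$ because $H$ is self-adjoint with $\|H\|\leq 1$ by Corollary~\ref{cor:radius_of_H}, and $A$ because $U(p)$ is), so every power has operator norm one and we obtain $\|U(\tfrac{t}{2N})^{2N}-e^{itH}\|=\|A^{N}-B^{N}\|\leq N\|A-B\|$. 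It therefore suffices to show $\|U(p)^{2}-e^{2ipH}\|=\mathcal{O}(p^{2})$.

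For this one-step comparison I would expand both sides to second order in $p$. From $U(p)=(qI_{\mathcal{A}}+ipS_o)C$ with $q=\sqrt{1-p^{2}}$ and $C^{2}=I_{\mathcal{A}}$, direct multiplication yields the exact formula
\begin{equation*}
U(p)^{2} = I_{\mathcal{A}} + 2ipqH - p^{2}(I_{\mathcal{A}}+U_o^{2}),
\end{equation*}
while the functional calculus together with $\|H\|\leq 1$ gives $e^{2ipH} = I_{\mathcal{A}} + 2ipH - 2p^{2}H^{2} + \mathcal{O}(p^{3})$. Using $S_o^{2}=C^{2}=(CS_oC)^{2}=I_{\mathcal{A}}$, one directly computes $4H^{2}=(S_o+CS_oC)^{2}=2I_{\mathcal{A}}+U_o^{2}+U_o^{*2}$, so that $2H^{2}-(I_{\mathcal{A}}+U_o^{2})=(U_o^{*2}-U_o^{2})/2$. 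Combining this with $2pq=2p+\mathcal{O}(p^{3})$ yields
\begin{equation*}
U(p)^{2}-e^{2ipH} = \tfrac{p^{2}}{2}(U_o^{*2}-U_o^{2}) + \mathcal{O}(p^{3}),
\end{equation*}
whose norm is at most $p^{2}+\mathcal{O}(p^{3})=\mathcal{O}(1/N^{2})$ since $U_o$ is unitary.

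Assembling the two steps gives $\|U(\tfrac{t}{2N})^{2N}-e^{itH}\|\leq N\cdot \mathcal{O}(1/N^{2})=\mathcal{O}(1/N)$, as claimed. The only delicate point is the $O(p^{2})$ matching: the pointwise limit argument of Theorem~\ref{thm:main} and Proposition~\ref{thm:DTtoCT1} only uses that the $O(p)$ coefficients of $U(p)^{2}$ and $e^{2ipH}$ coincide, but to extract the convergence \emph{rate} one must track the quadratic discrepancy, and the argument succeeds precisely because the residual $(U_o^{*2}-U_o^{2})/2$ is uniformly bounded in operator norm. No appeal to the invariant subspace decomposition $\mathcal{A}=\mathcal{I}\oplus\mathcal{B}$ used in the proof of Theorem~\ref{thm:main} is required here, since the telescoping inequality is insensitive to any such splitting.
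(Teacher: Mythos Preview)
Your argument is correct. The telescoping identity with unitary powers reduces the problem to the one-step bound, and your explicit computation $U(p)^{2}=I+2ipqH-p^{2}(I+U_o^{2})$ together with the identity $4H^{2}=2I+U_o^{2}+U_o^{*2}$ indeed yields $U(p)^{2}-e^{2ipH}=\tfrac{p^{2}}{2}(U_o^{*2}-U_o^{2})+\mathcal{O}(p^{3})$, so the one-step error is $\mathcal{O}(1/N^{2})$ and the conclusion follows.

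This is genuinely different from the paper's route. The paper first invokes the invariant-subspace decomposition $\mathcal{A}=\mathcal{I}\oplus\mathcal{B}$ from the proof of Theorem~\ref{thm:main}, represents both evolutions on $\mathcal{I}$ through the intertwiner $L$ and the (non-self-adjoint) block operator $\tilde{T}$, and then estimates $\tilde{T}_I(\tfrac{t}{2N})^{N}=\exp\!\big[it\tilde{T}+\tfrac{1}{N}\Theta\big]$ via a logarithm expansion and a Baker--Campbell--Hausdorff-type bound $\|\exp[A+B]-\exp[A]\exp[B]\|=\mathcal{O}(\|AB\|)$; the birth part is handled analogously with $S_o$ in place of $\tilde{T}$. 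Your approach is more elementary and more self-contained: it works directly on $\mathcal{A}$, uses only unitarity and a second-order Taylor expansion, and bypasses entirely the $\mathcal{I}/\mathcal{B}$ splitting, the operator $\tilde{T}$, and the exponential-of-a-sum estimate. The paper's approach, by contrast, recycles the structural machinery already built for Theorem~\ref{thm:main}, which is economical in that context but introduces the additional subtlety that $\tilde{T}$ is not normal (so bounding $\exp[it\tilde{T}]$ is not automatic). Your direct argument avoids that issue altogether.
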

\begin{proof}
By the previous argument in the proof of Theorem \ref{thm:main}, it is sufficient to show that
\[
\left\|\exp{it\tilde T} - \tilde T_I(\tfrac{t}{2N})^{N}\right\| = 
\mathcal{O}(\tfrac{1}{N})
\quad \text{and}\quad 
\left\|\exp{itS_o} - \tilde T_B(\tfrac{t}{2N})^{N}\right\| = \mathcal{O}(\tfrac{1}{N}).
\]
Since the latter equality can be proved in exactly the same way as the former, here we prove only the former equality.
As shown in the previous discussion, we can write $\tilde T_I(\tfrac{t}{2N})^{N}$ by a bounded operator $\Theta$ on $\mathcal{V}^2$ such that
\begin{align*}
\tilde T_I(\tfrac{t}{2N})^{N} &=
\exp{N\left(i\tfrac{t}{N}\tilde T + 
\tfrac{1}{N^2}
\Theta
    \right)
    }
    =
    \exp{
    it \tilde T + \tfrac{1}{N}\Theta
    }.
\end{align*}
It is well known that $\|\exp{A+B} - \exp{A}\exp{B}\| = \mathcal{O}(\|AB\|)$ holds for any boundary operators $A$ and $B$, so we can complete the proof as follows.
\begin{align*}
\left\|\exp{it\tilde T} - \tilde T_I(\tfrac{t}{2N})^{N}\right\|
&=
\left\|
\exp{it\tilde T}
-
\exp{it\tilde T}
\exp{\tfrac{1}{N}\Theta}
+
\exp{it\tilde T}
\exp{\tfrac{1}{N}\Theta}
-
\tilde T_I(\tfrac{t}{2N})^{N}
\right\|
\\
&\leq 
\left\|
\exp{it\tilde T}
\left(
I_{\mathcal{V}^2}
-
\exp{\tfrac{1}{N}\Theta}
\right)
\right\|
+
\left\|
\exp{it\tilde T}
\exp{\tfrac{1}{N}\Theta}
-
\exp{it\tilde T + \tfrac{1}{N}\Theta}
\right\|
\\
&=\mathcal{O}(\tfrac{1}{N}).
\end{align*}
\end{proof}
From Theorem~\ref{thm:order}, we immediately obtain the following corollary which implies that 
the running time of the discrete-time quantum walk to approximate the corresponding continuous-time quantum walk with a small error $\delta$.
\begin{corollary}
For any fixed $t>0$, there exists a constant $c_0>0$, such that for any $\delta>0$, if $N>c_0/\delta$, then 
\[ || \phi_t^{[C]}-\psi_N^{[D,t/N]} ||<\delta. \] 
\end{corollary}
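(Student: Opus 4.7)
The plan is to deduce the corollary directly from Theorem~\ref{thm:order}, exploiting the fact that both states under comparison are obtained by applying the operators $e^{itH}$ and $U(t/(2N))^{2N}$ to the common initial vector $\varphi_0$, so that an operator-norm bound converts immediately into a state-norm bound.

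First, I would unpack the $\mathcal{O}(1/N)$ assertion of Theorem~\ref{thm:order}: by the definition of big-$\mathcal{O}$, there exist constants $c_1>0$ and $N_0\in\mathbb{N}$, depending only on $t$, such that
\[ \|e^{itH}-U(t/(2N))^{2N}\|\leq c_1/N\quad\text{for every }N\geq N_0. \]
Next, since the Schr\"odinger equation (\ref{eq:CTQW}) with self-adjoint generator $H$ has solution $\phi_t^{[C]}=e^{itH}\varphi_0$, and iterating the recursion (\ref{eq:DTQW}) with $\epsilon=t/N$ yields $\psi_N^{[D,t/N]}=U(t/(2N))^{2N}\varphi_0$, the operator-norm estimate lifts to
\[ \|\phi_t^{[C]}-\psi_N^{[D,t/N]}\|\leq \|e^{itH}-U(t/(2N))^{2N}\|\cdot\|\varphi_0\|\leq \frac{c_1\|\varphi_0\|}{N}, \]
whenever $N\geq N_0$ (if $\varphi_0=0$ the conclusion is trivial, so we may assume $\|\varphi_0\|>0$).

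Finally, I would choose $c_0$ so as to absorb both the multiplicative constant $c_1\|\varphi_0\|$ and the implicit threshold $N_0$. A clean choice is $c_0:=\max\{c_1\|\varphi_0\|,\,2\|\varphi_0\|N_0\}$: when $\delta\leq 2\|\varphi_0\|$, the condition $N>c_0/\delta$ automatically forces $N>N_0$, and the displayed inequality then gives $\|\phi_t^{[C]}-\psi_N^{[D,t/N]}\|\leq c_1\|\varphi_0\|/N<\delta$; while for $\delta>2\|\varphi_0\|$ the conclusion holds for every $N$ simply because $\|\phi_t^{[C]}-\psi_N^{[D,t/N]}\|\leq 2\|\varphi_0\|<\delta$ by unitarity of $e^{itH}$ and $U(\epsilon)$. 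There is no real obstacle in the argument; the only mildly delicate point is the hidden threshold $N_0$ inside the big-$\mathcal{O}$ notation, which the choice of $c_0$ handles uniformly in $\delta$.
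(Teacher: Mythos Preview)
Your proposal is correct and follows exactly the route the paper intends: the paper states that the corollary is obtained ``immediately'' from Theorem~\ref{thm:order} and gives no further argument, while you have simply written out the straightforward passage from the operator-norm bound $\|e^{itH}-U(t/(2N))^{2N}\|=\mathcal{O}(1/N)$ to the state-norm bound via $\phi_t^{[C]}=e^{itH}\varphi_0$ and $\psi_N^{[D,t/N]}=U(t/(2N))^{2N}\varphi_0$. Your careful handling of the implicit threshold $N_0$ and the choice of $c_0$ is more detailed than anything the paper provides, but the underlying idea is identical.
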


\section{Spectral analysis for continuous-time Szegedy walk}\label{sect:CTQW_Spec}
In this section, we show the spectrum of $H$, $\sigma(H)$ , defined in \eqref{eq:hamiltonian}.
We note that the identity operator will be omitted from this chapter, so we write $C=2d^*d-1$ for example.
In general, since $H$ is a self-adjoint operator, $\sigma(H)$ is decomposed by point spectrum $\sigma_p(H)$ and continuous spectrum $\sigma_c(H)$, that is $\sigma(H)=\sigma_p(H)\cup \sigma_c(H)$ holds.
The aim of this section is to prove the following theorem.
The method for analysing eigenvalues is referred to in \cite{MOS2017, KSY2022}, while the method for analysing continuous spectra is referred to in \cite{SS2019}.
\begin{theorem}
\label{thm:spectrum}
The spectrum of $H$ is given as follows.
\begin{align*}
    \sigma(H) &= \sigma(T)\cup\sigma(-T)\cup \sigma_p(H),
    \\
    \sigma_p(H) &= 
    \sigma_p(T)\cup\sigma_p(-T)
    \cup
    \{+1\}^{\dim \mathcal{B}_+}
    \cup
    \{-1\}^{\dim \mathcal{B}_-}
    ,
\end{align*}
where superscripts of sets denote the multiplicity of eigenvalues, and $\mathcal{B}_\pm = \ker(C+1)\cap\ker(S_o\mp 1)$.
That is, the multiplicity of eigenvalue $\lambda$ is
\begin{align*}
M_{H}(\lambda)
=
\begin{cases}
M_T(\lambda) + M_T(-\lambda),\quad & \lambda\neq \pm 1,
\\
M_T(\pm 1)+\dim\mathcal{B}_{\pm},\quad & \lambda =\pm 1,
\end{cases}
\end{align*}
where $M_X(\lambda)$ means the multiplicity of eigenvalue $\lambda$ of an operator $X$.
Furthermore, eigenspaces are induced by the following.
\begin{align*}
\ker(H-\lambda)
=
\begin{cases}
d^*\ker(T-\lambda)\oplus (\lambda +S_o)d^*\ker(T+\lambda),\quad & \lambda\neq \pm 1,
\\
S_od^*\ker(T-\lambda)\oplus\mathcal{B}_{\pm},\quad & \lambda =\pm 1.
\end{cases}
\end{align*}
\end{theorem}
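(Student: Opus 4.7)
The strategy is to analyze $H$ separately on the invariant subspaces $\mathcal{I}$ and $\mathcal{B}$ from Lemma~\ref{lem:invariant}, treating $H|_\mathcal{B}$ by a direct calculation and reducing $H|_\mathcal{I}$ to spectral information on $T$ via the intertwining $HL = L\tilde T$ introduced in the proof of Theorem~\ref{thm:main}.

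On the birth subspace, for $\varphi \in \mathcal{B}$ the identities $d\varphi = 0$ and $dS_o\varphi = 0$ give $C\varphi = -\varphi$ and $CS_o\varphi = -S_o\varphi$, so
\[
    H\varphi = \tfrac{1}{2}(S_o\varphi + CS_oC\varphi) = \tfrac{1}{2}(S_o\varphi + S_o\varphi) = S_o\varphi.
\]
Since $S_o^2 = I_\mathcal{A}$, this forces $\sigma(H|_\mathcal{B}) \subset \{\pm 1\}$, with eigenspaces exactly $\mathcal{B}_\pm = \ker(C+I) \cap \ker(S_o \mp I)$, contributing the $\{+1\}^{\dim \mathcal{B}_+}$ and $\{-1\}^{\dim \mathcal{B}_-}$ parts of $\sigma_p(H)$.

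On the inherited subspace, I use $Hd^* = d^*T$ and $HS_od^* = 2d^*T^2 - S_od^*T$ from \eqref{eq:Hd^*}--\eqref{eq:HSd^*} to verify directly that $d^*\ker(T-\lambda) \subset \ker(H-\lambda)$ for every $\lambda$, and that $(\lambda+S_o)d^*\ker(T+\lambda) \subset \ker(H-\lambda)$ when $\lambda \neq \pm 1$. For the reverse containment, an arbitrary eigenvector is written $\varphi = d^*f_1 + S_od^*f_2$; applying $d$ and $dS_o$ to $H\varphi = \lambda\varphi$ produces a coupled linear system in $f_1, f_2$ which, after elimination, gives $(T-\lambda)(f_1 + Tf_2) = 0$ together with $(T^2 - I)(T+\lambda)f_2 = 0$, and this reduces (modulo $\ker L$) to the claimed direct-sum decomposition. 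For $\lambda \neq \pm 1$, linear independence of the two summands follows from injectivity of $d^*$ and invertibility of $\lambda + S_o$, yielding the multiplicity $M_T(\lambda) + M_T(-\lambda)$. The same intertwining handles the continuous spectrum through Weyl sequences: if $(f_n) \subset \mathcal{V}$ is a Weyl sequence for $T$ at $\lambda$, then $(d^*f_n)$ is one for $H$ at $\lambda$ because $\|d^*f_n\| = \|f_n\|$ and $(H-\lambda)d^*f_n = d^*(T-\lambda)f_n \to 0$; for $\lambda \in \sigma(-T) \setminus \{\pm 1\}$ the sequence $((\lambda+S_o)d^*f_n)$, with $f_n$ a Weyl sequence for $T$ at $-\lambda$, works analogously. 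For the opposite inclusion, the block-triangular form of $\tilde T$ gives $\sigma(\tilde T) = \sigma(T) \cup \sigma(-T)$, and the intertwining $HL = L\tilde T$ lets me invert $H-\lambda$ on $\mathcal{I}$ whenever $\lambda$ lies outside $\sigma(T) \cup \sigma(-T)$.

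The delicate part is $\lambda = \pm 1$, where $\lambda + S_o$ is no longer invertible and $\ker L$ becomes nontrivial. A short computation shows $\ker L = \{(f,\,-Tf) : f \in \ker(T^2 - I)\}$; in particular, $Tg = \pm g$ forces $d^*g = \pm S_od^*g$, so $d^*\ker(T \mp I)$ and $S_od^*\ker(T \mp I)$ collapse to a single copy inside $\mathcal{I}$. This collapse simultaneously explains the asymmetric form $S_od^*\ker(T-\lambda) \oplus \mathcal{B}_\pm$ of the eigenspace at $\lambda = \pm 1$ and produces the correct multiplicity $M_T(\pm 1) + \dim \mathcal{B}_\pm$; keeping the $\mathcal{I}$-contribution and the $\mathcal{B}$-contribution bookkept consistently at $\lambda = \pm 1$, without double-counting, is where I expect the argument to require the most care.
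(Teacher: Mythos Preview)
Your approach is essentially the paper's: both split $\mathcal A=\mathcal I\oplus\mathcal B$, compute $H|_{\mathcal B}=S_o|_{\mathcal B}$ directly, and treat $H|_{\mathcal I}$ through the intertwining $HL=L\tilde T$. The paper packages the inherited eigenspace as $\ker(H-\lambda)\cap\mathcal I=L\ker(L(\tilde T-\lambda))$ and solves $(\tilde T-\lambda)\mathbf f\in\ker L$ explicitly (Lemmas~\ref{lem:kerL_tilT}--\ref{lem:ev_inherited}); your ``apply $d$ and $dS_o$'' route produces exactly the same two conditions once one notes that $(H-\lambda)\varphi=0$ on $\mathcal I$ is equivalent to $d(H-\lambda)\varphi=dS_o(H-\lambda)\varphi=0$, so the two derivations coincide after unpacking. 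The only substantive divergence is in the inclusion $\sigma(H|_{\mathcal I})\subset\sigma(T)\cup\sigma(-T)$: the paper runs Weyl sequences through $d$ and $dS_o$ (Lemmas~\ref{lem:cont_lem1}--\ref{lem:cont_lem2}), whereas you invert $H-\lambda$ on $\mathcal I$ via the resolvent $(\tilde T-\lambda)^{-1}$. That resolvent argument is valid, but it tacitly needs $\ker L$ to be $\tilde T$-invariant---otherwise $(\tilde T-\lambda)\mathbf f\in\ker L$ does not force $\mathbf f\in\ker L$ and injectivity of $(H-\lambda)|_{\mathcal I}$ is not yet established; this invariance follows from your description $\ker L=\{(f,-Tf):f\in\ker(T^2-I)\}$ by a one-line check, but you should state it.
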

\begin{proof}
Lemma \ref{lem:ev_inherited} and Lemma \ref{lem:ev_birth} give all eigenvalues(including these multiplicities) and eigenspaces of $H$.
In addition, Lemma \ref{lem:cont_lem3} gives the spectrum $\sigma(H)$.
\end{proof}

\subsection{Point spectrum}
\label{subsec:point_sp}
We now analyse to the point spectrum of the Hamiltonian $H$.
Since $\mathcal{A}$ is decomposed into $\mathcal{A} = \mathcal{I}\bigoplus \mathcal{B}$, it is sufficient to focus on the respective eigenspaces of $\ker(H-\lambda)\cap\mathcal{I}$ and $\ker(H-\lambda)\cap\mathcal{B}$.
The former is called ``inherited eigenspaces" and the latter is called ``birth eigenspaces".

\subsubsection{Inherited eigenspaces}
We now focus on the inherited eigenspace $\ker(H-\lambda)\cap\mathcal{I}$.
From \eqref{eq:HL=LtilT}, an element of $\mathcal{I}$ expressed by $\Psi=L{\bf f}\in\mathcal{I},\ {\bf f} = [f_1,\, f_2]^T\in\mathcal{V}^2$, then $\Psi\in \ker(H-\lambda)$ is equivalent to the following:
\[
    L(\tilde T -\lambda){\bf f}=0.
\]
This means that $\ker(H-\lambda)\cap\mathcal{I} = L \ker(L(\tilde T -\lambda))$.
\begin{lemma}
\label{lem:kerL_tilT}
For $L$ and $\tilde T$ defined in \eqref{eq:L} \eqref{eq:tilT}, the followings hold for $\lambda\in\mathbb{R}$.
\begin{align*}
    &(i)\quad \ker L=
    \Lr{
    \begin{bmatrix}
    1 \\ -1
    \end{bmatrix}
    \zeta_1
    +
    \begin{bmatrix}
    1 \\ 1
    \end{bmatrix}
    \zeta_{-1}
    \ \middle |\ 
    \zeta_{\pm 1}\in\ker(T\mp 1)
    },
    \\
    &(ii)\quad \ker(\tilde T-\lambda)=
    \Lr{
    \begin{bmatrix}
    1 \\ 0
    \end{bmatrix}
    \zeta_{\lambda}
    +
    \begin{bmatrix}
    \lambda \\ 1
    \end{bmatrix}
    \zeta_{-\lambda}
    \ \middle |\ 
    \zeta_{\pm \lambda}\in\ker(T\mp \lambda)
    }.
\end{align*}
\end{lemma}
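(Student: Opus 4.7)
The plan is to treat the two parts separately, with $(ii)$ as a direct calculation exploiting the upper-triangular structure of $\tilde T$, and $(i)$ as a spectral-decomposition argument that also relies on the orthogonality $\mathcal{I} \perp \mathcal{B}$ established earlier in the paper.

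For $(ii)$, I would read off the kernel conditions row by row. The bottom row of $(\tilde T - \lambda)\mathbf{f} = 0$ reads $-(T+\lambda)f_2 = 0$, so $f_2 \in \ker(T+\lambda)$ and in particular $T^2 f_2 = \lambda^2 f_2$. Substituting into the top row yields $(T-\lambda) f_1 = -2\lambda^2 f_2$. Setting $\zeta_{-\lambda} := f_2$ and $\zeta_\lambda := f_1 - \lambda f_2$, a short computation gives $(T-\lambda)\zeta_\lambda = (T-\lambda)f_1 - \lambda(T-\lambda)f_2 = -2\lambda^2 f_2 + 2\lambda^2 f_2 = 0$, so $\zeta_\lambda \in \ker(T-\lambda)$, which recovers $\mathbf{f} = \begin{bmatrix}1\\0\end{bmatrix}\zeta_\lambda + \begin{bmatrix}\lambda\\1\end{bmatrix}\zeta_{-\lambda}$. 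The reverse inclusion is immediate substitution into $\tilde T - \lambda$.

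For the forward direction of $(i)$, the key identities from Lemma~\ref{lem:coisometry} are $dd^* = I_\mathcal{V}$ and the definition $T = dS_o d^*$. Applying $d$ to $d^* f_1 + S_o d^* f_2 = 0$ gives $f_1 + Tf_2 = 0$, and applying $dS_o$ gives $Tf_1 + f_2 = 0$. Eliminating $f_2 = -Tf_1$ yields $(T^2 - I)f_1 = 0$, so the spectral projectors $\tfrac{1}{2}(I \pm T)$ split $f_1 = \zeta_1 + \zeta_{-1}$ with $\zeta_{\pm 1} := \tfrac{1}{2}(I \pm T)f_1$; using $T^2 f_1 = f_1$ one checks $T\zeta_{\pm 1} = \pm \zeta_{\pm 1}$, so $\zeta_{\pm 1} \in \ker(T \mp 1)$. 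Then $f_2 = -Tf_1 = -\zeta_1 + \zeta_{-1}$, matching the claimed form.

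For the reverse inclusion in $(i)$, I would plug the proposed vector into $L$ and, using the same identities plus $T\zeta_{\pm 1} = \pm\zeta_{\pm 1}$, verify that $d\,L(\cdots) = 0$ and $dS_o\,L(\cdots) = 0$. This puts $L(\cdots) \in \ker d \cap \ker dS_o = \mathcal{B}$. But $L(\cdots) \in \ran L \subset \mathcal{I} = \mathcal{B}^\perp$, so $L(\cdots) = 0$. This orthogonality step is the one place where the argument goes beyond purely algebraic manipulations of $d$, $S_o$, and $T$, and it is what I would flag as the main subtlety; fortunately the paper has already prepared the ground by setting $\mathcal{B} := \mathcal{I}^\perp$ and proving $\mathcal{B} = \ker d \cap \ker dS_o$ in Section~2.4, so the reverse inclusion is essentially free once this identification is in hand.
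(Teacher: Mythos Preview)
Your proof is correct and, for part $(ii)$ and the forward direction of part $(i)$, follows the paper's argument essentially verbatim (the paper phrases the splitting of $f_1$ as ``since $T$ is self-adjoint'' rather than writing out the projectors $\tfrac{1}{2}(I\pm T)$, but the content is the same).

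The only divergence is in the reverse inclusion of $(i)$. The paper verifies it by a direct norm computation: writing $L\mathbf{f}=(1-S_o)d^*\zeta_1+(1+S_o)d^*\zeta_{-1}$ and expanding $\|L\mathbf{f}\|^2$ using $S_o^2=I$ and $dS_od^*=T$ gives $2\langle\zeta_1,(1-T)\zeta_1\rangle+2\langle\zeta_{-1},(1+T)\zeta_{-1}\rangle=0$. Your route instead checks $dL\mathbf{f}=dS_oL\mathbf{f}=0$, places $L\mathbf{f}$ in $\mathcal{B}\cap\mathcal{I}$, and invokes the orthogonality $\mathcal{I}\perp\mathcal{B}$. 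Both arguments are short and valid; the paper's is slightly more self-contained (it does not appeal to the $\mathcal{I}/\mathcal{B}$ decomposition), while yours makes explicit why the reverse inclusion is ``essentially free'' once that decomposition is in hand. Neither buys a real advantage over the other.
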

\begin{proof}
At first, if ${\bf f}=[f_1,\, f_2]^T\in\ker L$, then $d^*f_1 + Sd^*f_2=0$ holds.
Multiplying this equation by $d$ and $dS$ respectively, we have the followings:
\begin{align*}
    f_1 + Tf_2 =0,\quad
    Tf_1 + f_2 = 0.
\end{align*}
By substituting one for the other, we have a necessary condition to ${\bf f}\in\ker L$ as $(1-T^2)f_1=0$.
Since $T$ is a self-adjoint operator, we write $f_1 = \zeta_{1} + \zeta_{-1}$ with some $\zeta_1\in\ker(T-1),\ \zeta_{-1}\in\ker(T+1)$.
By substituting this for the above equation again, we have $f_2=-\zeta_{1} + \zeta_{-1}$, and
\[
    {\bf f} = \begin{bmatrix}
    1 \\ -1
    \end{bmatrix}\zeta_{1}
    +
    \begin{bmatrix}
    1 \\ 1
    \end{bmatrix}\zeta_{-1}.
\]
The following calculations show the statement (i) of this lemma:
\begin{align*}
 \|L{\bf f}\|^2 &=
 \inpr{(1-S_o)d^*\zeta_{1}+(1+S_o)d^*\zeta_{-1}}{(1-S_o)d^*\zeta_{1}+(1+S_o)d^*\zeta_{-1}}
 \\
 &=\inpr{\zeta_{1}}{d(1-S_o)^2d^*\zeta_{1} + d(1-S_o^2)d^*\zeta_{-1}}+\inpr{\zeta_{-1}}{d(1-S_o^2)d^*\zeta_{1} + d(1+S_o)^2d^*\zeta_{-1}}
 \\
 &=2\inpr{\zeta_{1}}{(1-T)\zeta_{1}}+2\inpr{\zeta_{-1}}{(1+T)\zeta_{-1}}
 \\&=0.
\end{align*}
Next, we see that ${\bf f}=[f_1,\, f_2]^T\in\ker(\tilde T -\lambda)$ is equivalent to the following equations hold:
\begin{align*}
    (T-\lambda)f_1 + 2T^2f_2 &= 0
    \\
    -(T+\lambda)f_2 &= 0.
\end{align*}
These equations give $f_1 = \zeta_{\lambda} + \lambda\zeta_{-\lambda},\ f_2 = \zeta_{-\lambda},\ \zeta_{\pm\lambda}\in\ker(T\mp\lambda)$.
This shows the statement (ii) of the lemma.
\end{proof}
We now consider $\ker L(\tilde T -\lambda)$.
From Lemma \ref{lem:kerL_tilT} (i), $(\tilde T-\lambda){\bf f}\in\ker L,\ {\bf f} =[f_1,\, f_2]^{T}\in\mathcal{V}^2$ is equivalent to the following both equations hold with some $\zeta_{\pm 1}\in\ker(T\mp 1)$:
\begin{align}
    \label{eq:A}
    (T-\lambda)f_1 + 2T^2f_2 &= \zeta_{1} + \zeta_{-1},
    \\
    \label{eq:B}
    (T+\lambda)f_2 &= \zeta_{1} - \zeta_{-1}. 
\end{align}
\begin{lemma}
\label{lem:ev_inherited}
The following holds.
\begin{align*}
    \ker(H-\lambda)\cap\mathcal{I}=
    L\ker(L(\tilde T-\lambda))
    =
    \begin{cases}
    d^*\ker(T-\lambda) \bigoplus (\lambda + S_o)d^*\ker(T+\lambda)
    ,\quad &\lambda \neq \pm 1,
    \\
    S_od^*\ker(T\mp 1),\quad & \lambda = \pm 1.
    \end{cases}
\end{align*}
In particular, this result yields the following statement about the multiplicity of eigenvalues.
\begin{align*}
    \dim(\ker(H-\lambda)\cap\mathcal{I})=
    \begin{cases}
    M_T(\lambda) + M_T(-\lambda)
    ,\quad &\lambda \neq \pm 1,
    \\
    M_T(\pm 1),\quad & \lambda = \pm 1.
    \end{cases}
\end{align*}
\end{lemma}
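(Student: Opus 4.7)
The plan is to exploit the intertwining $HL=L\tilde T$ from \eqref{eq:HL=LtilT}: since $\ran L=\mathcal{I}$, this gives $\ker(H-\lambda)\cap\mathcal{I}=L\ker(L(\tilde T-\lambda))$ essentially for free. So the task reduces to two sub-steps: (i) parameterize $\ker(L(\tilde T-\lambda))$ explicitly, and (ii) apply $L$ to the result while accounting for the non-trivial kernel of $L$ described in Lemma~\ref{lem:kerL_tilT}(i).

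For (i), $\mathbf{f}=[f_1,f_2]^\top\in\ker(L(\tilde T-\lambda))$ is exactly $(\tilde T-\lambda)\mathbf{f}\in\ker L$, which by Lemma~\ref{lem:kerL_tilT}(i) unpacks to the system \eqref{eq:A}--\eqref{eq:B} with free parameters $\zeta_{\pm 1}\in\ker(T\mp 1)$. I would then solve this system via the spectral decomposition of the self-adjoint $T$. Equation \eqref{eq:B} forces $f_2$ to be supported on $\ker(T-1)\oplus\ker(T+1)\oplus\ker(T+\lambda)$, with its components on $\ker(T\mp 1)$ rigidly prescribing $\zeta_{\pm 1}$ up to factors $1\pm\lambda$; substituting into \eqref{eq:A} and using $T^2=1$ on $\ker(T\mp 1)$ then pins down $f_1$ on the same subspace and leaves the projection of $f_1$ onto $\ker(T-\lambda)$ completely free. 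For $\lambda=\pm 1$ the factor $1\pm\lambda$ vanishes, forcing $\zeta_{\mp 1}=0$ and altering the shape of the parameterization.

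For (ii), I would compute $L\mathbf{f}=d^*f_1+S_od^*f_2$ on the parameterized $\mathbf{f}$ using the identities $d^*\zeta_1=S_od^*\zeta_1$ and $d^*\zeta_{-1}=-S_od^*\zeta_{-1}$ for $\zeta_{\pm 1}\in\ker(T\mp 1)$ (read directly from the description of $\ker L$). These identities cause the $\ker(T\pm 1)$-parts of $f_1$ and $f_2$ to cancel against each other, so the image simplifies to $d^*\ker(T-\lambda)+(\lambda+S_o)d^*\ker(T+\lambda)$ when $\lambda\neq\pm 1$, and collapses further to $S_od^*\ker(T\mp 1)$ when $\lambda=\pm 1$. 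For the multiplicity count, injectivity of $d^*$ is immediate from $dd^*=I_{\mathcal{V}}$, and a short calculation using $T=dS_od^*$ yields $\|(\lambda+S_o)d^*\beta\|^2=(1-\lambda^2)\|\beta\|^2$ for $\beta\in\ker(T+\lambda)$, which gives injectivity of $(\lambda+S_o)d^*$ on $\ker(T+\lambda)$ when $\lambda\neq\pm 1$; directness of the sum then follows by applying $d$ to a hypothetical common element and using the eigenvalue equations for $T$.

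The main obstacle I anticipate is the bookkeeping at the boundary $\lambda=\pm 1$, where the spectral-component analysis degenerates: the factor $1\pm\lambda$ vanishes, $\zeta_{\mp 1}$ is forced to zero, and the $\ker(T\pm 1)$-components of $f_1$ and $f_2$ both remain free but must merge under $L$ into the single summand $S_od^*\ker(T\mp 1)$. Verifying that this collapse is exactly right—neither larger nor smaller than claimed—requires some care, but the rest of the argument is routine linear algebra once the spectral decomposition of $T$ is set up.
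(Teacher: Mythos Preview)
Your proposal is correct and follows essentially the same route as the paper: both start from the intertwining $HL=L\tilde T$ to reduce to $L\ker(L(\tilde T-\lambda))$, solve the system \eqref{eq:A}--\eqref{eq:B} via the spectral structure of $T$ (with the degeneration at $\lambda=\pm 1$ forcing $\zeta_{\mp 1}=0$), and then apply $L$ while quotienting out $\ker L$. The only cosmetic difference is that the paper packages the $\lambda\neq\pm 1$ solution as $\ker(L(\tilde T-\lambda))=\ker L\oplus\ker(\tilde T-\lambda)$ before applying $L$, whereas you compute $L\mathbf{f}$ directly using the identities $d^*\zeta_{\pm 1}=\pm S_od^*\zeta_{\pm 1}$; these are the same cancellation viewed from two angles.
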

\begin{proof}
$\ker(H-\lambda)\cap\mathcal{I}=L\ker(L(\tilde T-\lambda))$ has already been mentioned.
If $\lambda\neq\pm 1$ case, \eqref{eq:B} gives
\[
    f_2 = \zeta_{-\lambda} + \frac{1}{1+\lambda}\zeta_{1}- \frac{1}{-1+\lambda}\zeta_{-1},
\]
and substituting it for \eqref{eq:A} gives
\[
    f_1 = \zeta_{\lambda} + \lambda\zeta_{-\lambda} - \frac{1}{1+\lambda}\zeta_{1}- \frac{1}{-1+\lambda}\zeta_{-1},
\]
where $\zeta_{\pm 1}\in\ker(T\mp 1),\ \zeta_{\pm \lambda}\in\ker(T\mp \lambda)$.
Thus, Lemma \ref{lem:kerL_tilT} (i) and (ii) imply
\begin{align*}
    \ker(L(\tilde T -\lambda))
    &=
    \Lr{
    \begin{bmatrix}
    1 \\ -1
    \end{bmatrix}
    \zeta_{1}
    +
    \begin{bmatrix}
    1 \\ 1
    \end{bmatrix}
    \zeta_{-1}
    \ \middle | \ \zeta_{\pm 1}\in\ker(T\mp 1)
    }
    \oplus
    \Lr{
    \begin{bmatrix}
    1 \\ 0
    \end{bmatrix}
    \zeta_{\lambda}
    +
    \begin{bmatrix}
    \lambda \\ 1
    \end{bmatrix}
    \zeta_{-\lambda}
    \ \middle |\ 
    \zeta_{\pm \lambda}\ker(T\mp \lambda)
    }
    \\
    &=\ker L\oplus \ker(\tilde T -\lambda).
\end{align*}

If $\lambda = \pm 1$ case, \eqref{eq:B} becomes $(T\pm 1)f_2=\zeta_{1}-\zeta_{-1}$.
Here, we have $\zeta_{\mp 1}=0$ because
\[
    0 = \inpr{\zeta_{\mp 1}}{(T\pm 1)f_2}
    =\inpr{\zeta_{\mp 1}}{\zeta_{1}-\zeta_{-1}} = \mp \|\zeta_{\mp 1}\|^2.
\]
Thus, \eqref{eq:B} is $(T\pm 1)f_2= \pm \zeta_{\pm 1}$, and it gives $f_2 = \zeta'_{\mp 1} + \frac{1}{2}\zeta_{\pm1}$ with $\zeta'_{\mp 1}\in\ker(T\pm 1)$.
By instituting it for \eqref{eq:A}, we also have
\[
    (T\mp 1)f_1 + 2\zeta'_{\mp 1} + \zeta_{\pm 1} = \zeta_{\pm 1},
\]
so $f_1 = \zeta'_{\pm 1} \pm \zeta'_{\mp 1}$ with $\zeta'_{\pm 1}\in\ker(T\mp 1)$.
Thus, Lemma \ref{lem:kerL_tilT} (i) implies
\begin{align*}
    \ker(L(\tilde T-\lambda)) &=
    \Lr{
    \begin{bmatrix}
    0 \\ 1
    \end{bmatrix}
    \zeta_{\pm 1} + 
    \begin{bmatrix}
    1 \\ 0
    \end{bmatrix}
    \zeta'_{\pm 1} +
    \begin{bmatrix}
    1 \\ \pm 1
    \end{bmatrix}
    \zeta'_{\mp 1}
    \ \middle |\ 
    \zeta_{\pm 1},\zeta'_{\pm 1},\in\ker(T\mp 1),\, \zeta'_{\mp 1},\in\ker(T\pm 1)
    }
    \\
    &=
    \Lr{
    \begin{bmatrix}
    0 \\ 1
    \end{bmatrix}
    \zeta_{\pm 1} + 
    \begin{bmatrix}
    1 \\ \mp 1
    \end{bmatrix}
    \zeta'_{\pm 1} +
    \begin{bmatrix}
    1 \\ \pm 1
    \end{bmatrix}
    \zeta'_{\mp 1}
    \ \middle |\ 
    \zeta_{\pm 1},\zeta'_{\pm 1},\in\ker(T\mp 1),\, \zeta'_{\mp 1},\in\ker(T\pm 1)
    }
    \\
    &=
    \Lr{
    \begin{bmatrix}
    0 \\ 1
    \end{bmatrix}
    \zeta_{\pm 1}
    \ \middle |\ 
    \zeta_{\pm 1}\in\ker(T\mp 1)
    }
    +
    \ker L.
\end{align*}
In summary, the following is obtained.
\[
    L\ker(L(\tilde T -\lambda))
    =
    \begin{cases}
    L\ker(\tilde T -\lambda),\quad & \lambda\neq \pm 1,
    \\
    L\Lr{
    \begin{bmatrix}
    0 \\ 1
    \end{bmatrix}
    \zeta_{\pm 1}
    \ \middle |\ 
    \zeta_{\pm 1}\in\ker(T\mp 1)
    }
    ,\quad & \lambda = \pm 1.
    \end{cases}
\]
Here, $L\ker(\tilde T -\lambda) = d^*\ker(T-\lambda)+d^*(\lambda + S_o)\ker(T+\lambda)$ holds.
For any $\zeta_{\pm\lambda}\in\ker(T\mp \lambda)$,
\begin{align*}
    \inpr{d^*\zeta_{\lambda}}{(\lambda + S_o)d^*\zeta_{-\lambda}}
    =
    \inpr{\zeta_{\lambda}}{(\lambda + T)\zeta_{-\lambda}}=0.
\end{align*}
Thus, we have $L\ker(\tilde T -\lambda) = d^*\ker(T-\lambda)\bigoplus d^*(\lambda + S_o)\ker(T+\lambda)$, and the proof of the lemma is complete.
\end{proof}

\subsubsection{Birth eigenspaces}
We now consider $\Psi\in\ker(H-\lambda)\cap\mathcal{B}$ case.
\begin{lemma}
\label{lem:ev_birth}
The following holds.
\[
    \ker(H-\lambda)\cap\mathcal{B}
    =
    \begin{cases}
    \{0\},\quad & \lambda \neq \pm 1,
    \\
    \mathcal{B}_{\pm},\quad & \lambda = \pm 1.
    \end{cases}
\]
Note that $\mathcal{B}_{\pm} = \ker(C+1)\cap\ker(S_o\mp 1)$ was already defined.
\end{lemma}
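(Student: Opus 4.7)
The plan is to reduce $H$ on $\mathcal{B}$ to $S_o$ and then use that $S_o$ is an involution. The computation just above equation~\eqref{eq:MainThmDT} (in the proof of Theorem~\ref{thm:main}) already shows that for any $\varphi_B\in\mathcal{B}$ one has $H\varphi_B=S_o\varphi_B$: indeed, $d\varphi_B=0$ forces $C\varphi_B=-\varphi_B$, and $dS_o\varphi_B=0$ forces $CS_o\varphi_B=-S_o\varphi_B$, so $CS_oC\varphi_B=S_o\varphi_B$ and thus $H\varphi_B=\tfrac12(S_o+CS_oC)\varphi_B=S_o\varphi_B$. Consequently $\ker(H-\lambda)\cap\mathcal{B}=\ker(S_o-\lambda)\cap\mathcal{B}$, which I will take as the starting point.

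Since $S_o$ is an involution ($S_o^2=I_{\mathcal{A}}$), its spectrum is contained in $\{-1,+1\}$, so $\ker(S_o-\lambda)=\{0\}$ for every $\lambda\neq\pm 1$. This immediately handles the first case. For $\lambda=\pm 1$, I will rewrite $\mathcal{B}$ using two observations: first, from $C=2d^*d-I_{\mathcal{A}}$ together with $\|d\Psi\|^2=\langle\Psi,d^*d\Psi\rangle$ one gets $\ker(d)=\ker(d^*d)=\ker(C+1)$; second, on $\ker(S_o\mp 1)$ we have $S_o\Psi=\pm\Psi$, hence $dS_o\Psi=\pm d\Psi$, so the condition $\Psi\in\ker(dS_o)$ is automatic once $\Psi\in\ker(d)$.

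Combining these, for $\lambda=\pm 1$ we obtain
\[
\ker(H\mp 1)\cap\mathcal{B}=\ker(S_o\mp 1)\cap\ker(d)\cap\ker(dS_o)=\ker(S_o\mp 1)\cap\ker(C+1)=\mathcal{B}_{\pm},
\]
which is exactly the claim. There is no real obstacle here: the lemma is essentially a direct corollary of the already-established identity $H|_{\mathcal{B}}=S_o|_{\mathcal{B}}$ and the algebraic reformulation $\ker(d)=\ker(C+1)$. The only point that needs a line of care is verifying that $\ker(dS_o)$ drops out of the definition of $\mathcal{B}$ when intersected with $\ker(S_o\mp 1)$, so that the answer matches the definition of $\mathcal{B}_\pm$ in the statement.
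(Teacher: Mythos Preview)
Your proof is correct and follows essentially the same approach as the paper: both reduce $H$ to $S_o$ on $\mathcal{B}$ via $\ker(d)=\ker(C+1)$, use $S_o^2=I_{\mathcal{A}}$ to dispose of $\lambda\neq\pm 1$, and then identify $\ker(S_o\mp 1)\cap\mathcal{B}$ with $\mathcal{B}_\pm$. The only cosmetic difference is that the paper verifies the reverse inclusion $\mathcal{B}_\pm\subset\ker(H\mp 1)\cap\mathcal{B}$ by a direct check, whereas you obtain it automatically by observing that the condition $\Psi\in\ker(dS_o)$ is redundant on $\ker(S_o\mp 1)\cap\ker(d)$.
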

\begin{proof}
Since $C=2d^*d-1$, we have $\ker d=\ker(C+1)$.
If $\Psi\in\ker(H-\lambda)\cap\mathcal{B}$, then
\[
    H\Psi = 
    \frac{1}{2}(CS_oC+S_o)\Psi = \frac{1}{2}(-C+1)S_o\Psi=(1+d^*d)S_o\Psi=S_o\Psi=\lambda\Psi.
\]
Here, $S_o^2=1$ means $\sigma(S_o) = \{-1,\, +1\}$, so the above equation shows $\Psi=0$ if $\lambda \neq \pm 1$,\ $\Psi\in\ker(S_o\mp 1)$ if $\lambda = \pm 1$.
Thus $\ker(H-\lambda)\cap\mathcal{B}\subset \ker(C+1)\cap\ker(S_o\mp 1)$ holds.
Inversely, if $\Psi \in \ker(C+1)\cap\ker(S_o\mp 1)$, then we can easily check $\Psi\in\ker(H\mp 1)$.
Therefore $\ker(H-\lambda)\cap\mathcal{B}\supset \ker(C+1)\cap\ker(S_o\mp 1)$ holds.
\end{proof}

\subsection{Continuous spectrum}
In this section we consider the continuous spectrum of $H$.
Since $H$ is a self-adjoint operotr, $\lambda\in\sigma(H)$ if and only if there exists a sequence $\{\Psi_n\}_{n\in\mathbb{N}}\subset \mathcal{A}$ satisfying $\|\Psi_n\|^2=1$ and $\|(H-\lambda)\Psi_n\|^2\to 0\ (n\to\infty)$.
\begin{lemma}
\label{lem:cont_lem1}
For $\lambda\in\sigma(H)$ and a sequence $\{\Psi_n\}_{n\in\mathbb{N}}\subset \mathcal{A}$ satisfying $(H-\lambda)\Psi_n=o(1)$ and $\|\Psi_n\|^2=1$ for any $n\in\mathbb{N}$, if $\Psi_n$ satisfies both conditions that $\lim_{n\to\infty}d\Psi_n=0$ and $\lim_{n\to\infty}dS_o\Psi_n=0$, then $\lambda\in\{-1, +1\}$ holds.
\end{lemma}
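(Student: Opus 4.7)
The plan is to use the hypotheses $d\Psi_n\to 0$ and $dS_o\Psi_n\to 0$ to show that $H$ essentially acts as $S_o$ on the sequence $\{\Psi_n\}$, and then exploit $S_o^2=I_{\mathcal{A}}$ to conclude that $\lambda\in\{-1,+1\}$.

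First I would rewrite $C=2d^*d-I_{\mathcal{A}}$ (Lemma \ref{lem:coisometry}) and note that $\|d^*\|\leq 1$ since $dd^*=I_{\mathcal{V}}$. Applying this to $\Psi_n$ gives
\[
C\Psi_n+\Psi_n=2d^*(d\Psi_n)\longrightarrow 0,\qquad CS_o\Psi_n+S_o\Psi_n=2d^*(dS_o\Psi_n)\longrightarrow 0,
\]
where the convergence is in the norm of $\mathcal{A}$. Using that $S_o$ and $C$ are bounded (indeed unitary), a short chain of substitutions gives
\[
S_oC\Psi_n=-S_o\Psi_n+o(1),\qquad CS_oC\Psi_n=-CS_o\Psi_n+o(1)=S_o\Psi_n+o(1).
\]

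Substituting into the definition \eqref{eq:hamiltonian} of the Hamiltonian yields
\[
H\Psi_n=\tfrac{1}{2}\bigl(S_o\Psi_n+CS_oC\Psi_n\bigr)=S_o\Psi_n+o(1).
\]
Combining with the defining property $(H-\lambda)\Psi_n=o(1)$ of the Weyl sequence, I obtain $(S_o-\lambda)\Psi_n=o(1)$ while $\|\Psi_n\|=1$. Hence $\lambda$ lies in the approximate point spectrum of $S_o$, which for a self-adjoint operator coincides with $\sigma(S_o)$.

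Finally, since $S_o^2=I_{\mathcal{A}}$, the minimal polynomial of $S_o$ divides $x^2-1$, so $\sigma(S_o)\subseteq\{-1,+1\}$. This forces $\lambda\in\{-1,+1\}$, completing the argument. The proof is essentially a bookkeeping exercise in operator norms; the only subtlety I anticipate is being careful that the $o(1)$ terms genuinely propagate through the composition $CS_oC$, which is handled by the uniform bound $\|S_o\|=\|C\|=1$.
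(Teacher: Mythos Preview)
Your argument is correct. The key observation---that the hypotheses $d\Psi_n\to 0$ and $dS_o\Psi_n\to 0$ force $C\Psi_n=-\Psi_n+o(1)$ and $CS_o\Psi_n=-S_o\Psi_n+o(1)$, whence $H\Psi_n=S_o\Psi_n+o(1)$---is exactly right, and the $o(1)$ terms propagate through the compositions because $\|S_o\|=\|C\|=1$. From $(S_o-\lambda)\Psi_n=o(1)$ with $\|\Psi_n\|=1$ you correctly conclude $\lambda\in\sigma(S_o)\subseteq\{-1,+1\}$.

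Your route is somewhat more direct than the paper's. The paper also reduces $H\Psi_n$ to $S_o\Psi_n+o(1)$ implicitly, but then works with inner products: it computes $\langle H\Psi_n,\lambda\Psi_n\rangle$ in two ways, obtaining $\lambda^2\langle S_o\Psi_n,\Psi_n\rangle+o(1)$ on one side and $1+o(1)$ on the other, and then applies Cauchy--Schwarz together with the bound $\|H\|\leq 1$ (Corollary~\ref{cor:radius_of_H}) to squeeze $|\lambda|=1$. Your approach bypasses both Cauchy--Schwarz and Corollary~\ref{cor:radius_of_H} by passing directly to the approximate point spectrum of $S_o$, which makes the dependence on $S_o^2=I_{\mathcal{A}}$ more transparent and the argument shorter.
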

\begin{proof}
Since $C=2d^*d-1$ and $H=\frac{1}{2}(S_o+CS_oC)$, the assumption gives that
\begin{align}
    \nonumber
    \inpr{H\Psi_n}{\lambda\Psi_n} &=
    \inpr{\frac{\lambda}{2}(S_o + (2d^*d-1)S_o(2d^*d-1))\Psi_n}{\lambda\Psi_n}
    \\ \nonumber
    &=
    \inpr{\frac{\lambda}{2}(S_o -(2d^*d-1)S_o)\Psi_n}{\lambda\Psi_n} + o(1)
    \\
    &=
    \lambda^2 \inpr{S_o\Psi_n}{\Psi_n} + o(1).
    \label{eq:cont_lem1_1}
\end{align}
On the other hands, we have
\begin{align}
\label{eq:cont_lem1_2}
    \inpr{H\Psi_n}{\lambda\Psi_n}
    =
    \inpr{H\Psi_n}{H\Psi_n}
    +
    \inpr{H\Psi_n}{(\lambda - H )\Psi_n}=
    \inpr{S_o\Psi_n}{S_o\Psi_n} + o(1)=1+o(1).
\end{align}
From \eqref{eq:cont_lem1_1} and \eqref{eq:cont_lem1_2}, the following holds.
\[
    \lambda^2|\inpr{S_o\Psi_n}{\Psi_n}| = 1 + o(1).
\]
Here, the Cauchy–Schwarz inequality shows $|\inpr{S_o\Psi_n}{\Psi_n}|\leq 1$, so we have $|\lambda|\geq 1$.
However, we see $|\lambda|\leq 1$ from Corollary \ref{cor:radius_of_H}, thus $\lambda\in\{-1,+1\}$.
\end{proof}
\begin{lemma}
\label{lem:cont_lem2}
Following two conditions hold.
\begin{align*}
    (i)\quad &\sigma(H)\setminus\{-1,+1\}\subset \sigma(T)\cup\sigma(-T).
    \\
    (ii)\quad &(\sigma(T)\cup\sigma(-T))\setminus\{-1,+1\}\subset \sigma(H).
\end{align*}
In particular, combining (i) and (ii) shows $\sigma(H)\setminus\{-1,+1\} = (\sigma(T)\cup\sigma(-T))\setminus\{-1,+1\}$.
\end{lemma}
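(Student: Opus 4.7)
The plan is to exploit the intertwining relations $dH=Td$ (equation~\eqref{eq:dH}) and $dS_oH=TdS_o$ (also derived in the proof of Lemma~\ref{lem:invariant}) together with Lemma~\ref{lem:cont_lem1} in order to transfer Weyl sequences back and forth between $H$ on $\mathcal{A}$ and $T$ on $\mathcal{V}$. Since $H$ is self-adjoint, $\lambda\in\sigma(H)$ is equivalent to the existence of a unit sequence $\{\Psi_n\}\subset\mathcal{A}$ with $(H-\lambda)\Psi_n\to 0$, and likewise for $T$; the whole lemma amounts to building such sequences in both directions.

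For inclusion (i), I would start with a Weyl sequence $\{\Psi_n\}$ for $H$ at $\lambda\neq\pm 1$. By the contrapositive of Lemma~\ref{lem:cont_lem1}, at least one of $d\Psi_n$ or $dS_o\Psi_n$ does not tend to $0$, so after passing to a subsequence I may assume $\|d\Psi_n\|\geq c>0$ (the alternative case $\|dS_o\Psi_n\|\geq c$ is handled identically using $dS_oH=TdS_o$). Then $\phi_n:=d\Psi_n/\|d\Psi_n\|$ is a unit vector in $\mathcal{V}$, and the intertwiner gives
\[
(T-\lambda)\phi_n \;=\; \frac{d(H-\lambda)\Psi_n}{\|d\Psi_n\|}\;\longrightarrow\;0,
\]
since $\|d\|\leq 1$ (as $dd^*=I_{\mathcal{V}}$) and the denominator stays bounded below. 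Hence $\lambda\in\sigma(T)\subset\sigma(T)\cup\sigma(-T)$.

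For inclusion (ii), I take $\mu\in\sigma(T)\setminus\{\pm 1\}$ with a Weyl sequence $\phi_n$ in $\mathcal{V}$ and construct Weyl sequences for $H$ at both $\mu$ and $-\mu$. For the eigenvalue $\mu$, setting $\Psi_n:=d^*\phi_n$ yields $\|\Psi_n\|=1$ by $dd^*=I_{\mathcal{V}}$ (Lemma~\ref{lem:coisometry}), and \eqref{eq:Hd^*} gives $(H-\mu)\Psi_n=d^*(T-\mu)\phi_n\to 0$. For the eigenvalue $-\mu$, guided by the inherited eigenspace $(\lambda+S_o)d^*\ker(T+\lambda)$ from Lemma~\ref{lem:ev_inherited}, I set $\Psi_n:=(-\mu+S_o)d^*\phi_n$; a short calculation using \eqref{eq:Hd^*}, \eqref{eq:HSd^*}, and $T\phi_n=\mu\phi_n+o(1)$ will give $(H+\mu)\Psi_n\to 0$, while $dd^*=I_{\mathcal{V}}$ and $dS_od^*=T$ yield
\[
\|\Psi_n\|^2 \;=\; \langle\phi_n,\,((\mu^2+1)-2\mu T)\phi_n\rangle\;\longrightarrow\;1-\mu^2.
\]

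The main obstacle is keeping $\|\Psi_n\|$ bounded below in this last construction, and this is precisely where the restriction $\mu\neq\pm 1$ is essential: $\|T\|\leq 1$ forces $|\mu|\leq 1$, so excluding $\pm 1$ gives $|\mu|<1$ and hence $1-\mu^2>0$. After normalising, I obtain a genuine Weyl sequence for $H$ at $-\mu$, so $-\mu\in\sigma(H)$. Combining (i) with the two cases of (ii) then establishes the stated equality $\sigma(H)\setminus\{-1,+1\}=(\sigma(T)\cup\sigma(-T))\setminus\{-1,+1\}$.
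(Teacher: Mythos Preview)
Your proposal is correct and follows the same Weyl-sequence strategy as the paper: use Lemma~\ref{lem:cont_lem1} to ensure one of $d\Psi_n$, $dS_o\Psi_n$ stays bounded below, push through an intertwining relation to get a Weyl sequence for $T$, and in the reverse direction lift via $d^*\phi_n$ and $(\pm\mu+S_o)d^*\phi_n$ with the norm computation $\|(\pm\mu+S_o)d^*\phi_n\|^2\to 1-\mu^2$. The one minor difference is that in part~(i) you invoke the clean identity $dS_oH=TdS_o$ from Lemma~\ref{lem:invariant} and land directly in $\sigma(T)$ in both cases, whereas the paper in the case $d\Psi_n\to 0$ instead computes $dS_oC(H-\lambda)\Psi_n$ and uses the hypothesis $d\Psi_n\to 0$ to obtain $(T+\lambda)dS_o\Psi_n\to 0$, landing in $\sigma(-T)$; your route is slightly more economical but not substantively different.
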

\begin{proof}
First of all, we will show (i).
Suppose $\lambda\in\sigma(H)\setminus\{-1,+1\}$ and there exists a sequence $\{\Psi_n\}_{n\in\mathbb{N}}$ satisfying $\|(H-\lambda)\Psi_n\| = o(1)$ and $\|\Psi_n\|^2=1$.
Then, we have.
\begin{align*}
    \|d(H-\lambda)\Psi_n\| = \|\tfrac{1}{2}d(S_o+CS_oC)\Psi_n-\lambda d\Psi_n\| = \|\tfrac{1}{2}dS_o(C+1)\Psi_n -\lambda d\Psi_n\|=o(1).
\end{align*}
Here, we consider $\lim_{n\to\infty}d\Psi_n \neq 0$ case.
Let $f_n =d\Psi_n$, and we can take a subsequence sequence $\{f_{n_k}\}$ such that $\inf_{k}\|f_{n_k}\|=:c>0$ holds.
From the above equation, we have
\begin{align*}
    \|(T-\lambda ) f_n\| = 
    \|(dS_od^* -\lambda)f_n\| = \| dS_od^*d\Psi_n - \lambda d\Psi_n\| 
    = \|\tfrac{1}{2}d S_o(C+1)\Psi_n -\lambda d\Psi_n\| = o(1).
\end{align*}
Let $\tilde f_{k} = f_{n_k}/\|f_{n_k}\|$, then we see $\|\tilde f_{n_k}\|^2=1$, and that
\[
    \|(T-\lambda)\tilde f_{k}\| \leq \frac{1}{c}\|(T-\lambda)\tilde f_{k}\| = o(1).
\]
Therefore, by taking a subset $X_1\subset\sigma(H)\setminus\{-1, +1\}$ whose any element $\lambda$ satisfying $\lim_{n\to\infty}d\Psi_n\neq 0$ with $\|(H-\lambda)\Psi_n\| = o(1)$, we obtain that $\sigma(H)\setminus\{-1, +1\}\cap X_1 \subset \sigma(T)$.
Next, we consider $\lim_{n\to\infty}d\Psi_n=0$ case.
In this case, we remark that $\lim_{n\to\infty}S_o d\Psi_n\neq 0$ because of Lemma \ref{lem:cont_lem1} and $\lambda\in\sigma(H)\setminus\{-1, +1\}$.
Let $g_n =dS_o\Psi_n$, then we have
\begin{align}
    \nonumber
    \|(T+\lambda)g_n\| = \|dS_od^* dS_o\Psi_n + \lambda dS_o\Psi_n\| &= \|\tfrac{1}{2}dS_o(C+1)S_o\Psi_n + \lambda dS_o\Psi_n\| 
    \\
    &= \|\tfrac{1}{2}dS_oCS_o\Psi_n + \tfrac{1}{2}d\Psi_n +\lambda dS_o\Psi_n\|.
    \label{eq:cont_lem2_1}
\end{align}
Here, we see that
\begin{align}
    \nonumber
    \|dS_oC(H-\lambda)\Psi_n\| &= \|\tfrac{1}{2}dS_oC(S_o + CS_o C)\Psi_n -\lambda dS_oC\Psi_n\|
    \\
    \nonumber
    &=\|\tfrac{1}{2}dS_oCS_o\Psi_n + \tfrac{1}{2}d\Psi_n -\lambda dS_o(2d^*d-1)\Psi_n\|
    \\
    \label{eq:cont_lem2_2}
    &=\|\tfrac{1}{2}dS_oCS_o\Psi_n + \tfrac{1}{2}d\Psi_n +\lambda dS_o\Psi_n\| + o(1).
\end{align}
The assumption says that $\|dS_oC(H-\lambda)\Psi_n\|=o(1)$ holds, so we obtain $\|(T+\lambda)g_n\|=o(1)$ by combining \eqref{eq:cont_lem2_1} and \eqref{eq:cont_lem2_2}.
We are now ready to take a subsequence $\{g_{n_k}\}$ such that $\inf_k\|g_{n_k}\|:=c>0$ holds. Let $\tilde g_{k} = g_{n_k}/\|g_{n_k}\|$, then we see $\|\tilde g_k\|^2=1$, and that
\[
    \|(T+\lambda)\tilde g_k\| \leq \frac{1}{c}\|(T+\lambda)\tilde g_{k}\| = o(1).
\]
Let a subset $X_2\subset \sigma(H)\setminus\{-1, +1\}$ whose any element $\lambda$ satisfying $\lim_{n\to\infty}d\Psi_n=0$ with $\|(H -\lambda)\Psi_n\| = o(1)$.
Above equation says that $\sigma(H)\setminus\{-1, +1\}\cap X_2 \subset \sigma(-T)$.
Since $X_1 \cup X_2 = \sigma(H)\setminus\{-1, +1\}$, we conclude $\sigma(H)\setminus\{-1, +1\}\subset \sigma(T)\cup\sigma(-T)$.

Secondly, we will show the statement (ii) of the lemma.
We suppose $\lambda\in\sigma(\pm T)\setminus\{-1,+1\}$ and there exists a sequence $\{h_n^\pm\}_{n\in\mathbb{N}}\subset \mathcal{V}$ such that $\|(T \mp \lambda) h_n^\pm\| = o(1)$ and $\|h_n^\pm\|=1$.

Since $\{h_n^+\}$ can always be given as a non-zero sequence and $\ker(d^*)=\{0\}$, we can take $\{h_n^+\}$ satisfying $\|d^*h_n^+\|\neq 0$.
Moreover, if $\|h_n^-\|\neq 0$, then $\lambda\not\in\{-1, +1\}$ says $\|(S_o+\lambda)d^*h_n^-\|\neq 0$, because
\[
    \|(S_o+\lambda)d^* h_n^-\|^2 = \inpr{d(S_o+\lambda)^2d^*h_n^-}{h_n^-} = \inpr{(1+\lambda^2 + 2T)h_n^-}{h_n^-} = (1-\lambda^2)\|h_n^-\|^2.
\]
From these arguments, we can define $\Psi^+_n = d^*h_n^+/\|d^*h_n^+\|$ and $\Psi^-_n =(S_o+\lambda )d^*h_n^-/\|(S_o + \lambda )d^*h_n^-\|$.
Here, we have
\begin{align*}
    \|(H-\lambda)\Psi_n^+\| &= \|\tfrac{1}{2}(S_o +CS_oC)d^*h_n^+ -\lambda d^*h_n^+ \|/\|d^*h_n^+\|
    \\
    &=\|\tfrac{1}{2}(C+1)S_od^*h_n^+ -\lambda d^*h_n^+ \|/\|d^*h_n^+\|
    \\
    &=\|dS_od^*h_n^+ -\lambda d^*h_n^+ \|/\|d^*h_n^+\|
    \\
    &=\|d^*(T-\lambda)h_n^+\|/\|d^*h_n^+\|
    \\
    &= o(1).
\end{align*}
Therefore $\sigma(T)\subset\sigma(H)$ holds.
Next, we have
\begin{align*}
    \|(H-\lambda)\Psi_n^-\| &=
    \|\tfrac{1}{2}(S_o + CS_oC)(S_o + \lambda)d^* h_n^- -\lambda(S_o+\lambda)d^*h_n^-\|/\|\lambda(S_o+\lambda)d^*h_n^-\|
    \\
    &=
    \|\tfrac{1}{2}(1+CS_oCS_o)d^*h_n^- + \tfrac{\lambda}{2}(C+1)S_od^*h_n^--\lambda(S_o+\lambda)d^*h_n^-
    \|/\|\lambda(S_o+\lambda)d^*h_n^-\|
    \\
    &=
    \|\tfrac{1}{2}\Lr{1+(2d^*d-1)S_o(2d^*d-1)S_o}d^*h_n^- + \lambda d^* Th_n^- -\lambda(S_o+\lambda)d^*h_n^-
    \|/\|\lambda(S_o+\lambda)d^*h_n^-\|
    \\
    &=\|(2d^*T^2-S_od^*T)h_n^- + \lambda d^*T h_n^- -\lambda(S_o+\lambda)d^*h_n^-
    \|/\|\lambda(S_o+\lambda)d^*h_n^-\|
    \\
    &=
    \|
    (-S_od^* + d^*(2T-\lambda))(T+\lambda)h_n^-
    \|/\|\lambda(S_o+\lambda)d^*h_n^-\|
    \\
    &= o(1).
\end{align*}
Thus, we cocnlude $\sigma(-T)\setminus\{-1, +1\}\subset\sigma(H)$.
Combining it with $\sigma(T)\subset\sigma(H)$ shows the statement of (ii).
\end{proof}
\begin{lemma}
\label{lem:cont_lem3}
For $K=\sigma_p(H)\cap\{-1, +1\}$, the following condition hold.
\begin{align*}
    \sigma(H)\setminus K = (\sigma(T)\cup\sigma(-T))\setminus K
\end{align*}
In particular, we have $\sigma(H) = (\sigma(T)\cup\sigma(-T))\cup\sigma_p(H)$ by considering the union set of the above and $\sigma_p(H)$.
\end{lemma}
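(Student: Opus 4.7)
The plan is to combine Lemma~\ref{lem:cont_lem2} with two standard facts about bounded self-adjoint operators: (a) isolated points of the spectrum are eigenvalues, and (b) the spectrum is closed. Lemma~\ref{lem:cont_lem2} already yields the identity $\sigma(H)\setminus\{-1,+1\}=(\sigma(T)\cup\sigma(-T))\setminus\{-1,+1\}$, so the full claim reduces to checking the two sides at $\lambda=\pm 1$. Since elements of $K$ are removed from both sides, only the case $\lambda\in\{-1,+1\}$ with $\lambda\notin\sigma_p(H)$ requires genuine work.

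For the forward inclusion $\lambda\in\sigma(H)\Rightarrow\lambda\in\sigma(T)\cup\sigma(-T)$ at such $\lambda$: by (a), $\lambda\in\sigma(H)\setminus\sigma_p(H)$ is not isolated in $\sigma(H)$, so I can pick $\lambda_n\to\lambda$ with $\lambda_n\in\sigma(H)\setminus\{\lambda\}$. For $n$ large enough $\lambda_n\notin\{-1,+1\}$, whence Lemma~\ref{lem:cont_lem2} puts $\lambda_n\in\sigma(T)\cup\sigma(-T)$, and (b) forces $\lambda\in\sigma(T)\cup\sigma(-T)$.

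For the reverse inclusion, the subcase $\lambda\in\sigma(T)$ is straightforward: a Weyl sequence $h_n$ for $T$ at $\lambda$ is pushed forward to $\Psi_n:=d^*h_n$, which is normalized (using $dd^*=I_{\mathcal{V}}$) and satisfies $(H-\lambda)\Psi_n=d^*(T-\lambda)h_n\to 0$, so $\lambda\in\sigma(H)$. If instead $\lambda\in\sigma(-T)\setminus\sigma(T)$, then $-\lambda\in\sigma(T)$; if moreover $-\lambda$ is an accumulation point of $\sigma(T)$, I pick $\mu_n\to-\lambda$ in $\sigma(T)\setminus\{-\lambda\}$, note that $-\mu_n\to\lambda$ with $-\mu_n\notin\{-1,+1\}$ eventually, apply Lemma~\ref{lem:cont_lem2} to get $-\mu_n\in\sigma(H)$, and conclude by closedness that $\lambda\in\sigma(H)$.

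The main obstacle is the remaining corner case in which $\lambda=\pm 1$ lies in $\sigma(-T)\setminus\sigma(T)$ and is isolated there, so that $-\lambda$ is an isolated point of $\sigma(T)$ and hence, by (a), an eigenvalue of $T$. Here Lemma~\ref{lem:ev_inherited} only guarantees $-\lambda\in\sigma_p(H)$, not $\lambda\in\sigma_p(H)$. To close this gap I would try to upgrade an eigenvector $h\in\ker(T+\lambda)$ of $T$ into a genuine $\lambda$-eigenvector of $H$ by combining $d^*h$ and $S_od^*h$ (exploiting for instance the identity $(1+S_o)d^*h=0$ on $\ker(T+\lambda)$ when $\lambda=\pm 1$), or by invoking the birth subspace $\mathcal{B}_{\pm}$, in order to conclude $\lambda\in\sigma_p(H)$ and hence $\lambda\in K$, contradicting the assumption $\lambda\notin K$. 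The ``in particular'' identity $\sigma(H)=(\sigma(T)\cup\sigma(-T))\cup\sigma_p(H)$ then follows at once by unioning both sides of $\sigma(H)\setminus K=(\sigma(T)\cup\sigma(-T))\setminus K$ with $\sigma_p(H)$ and using $K\subseteq\sigma_p(H)$.
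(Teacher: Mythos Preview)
Your overall strategy---reduce to the boundary points $\pm 1$ via Lemma~\ref{lem:cont_lem2}, then use that isolated spectral points of a self-adjoint operator are eigenvalues together with closedness of spectra---is exactly the route the paper takes. The forward inclusion and the sub-cases ``$\lambda\in\sigma(T)$'' and ``$-\lambda$ is an accumulation point of $\sigma(T)$'' are handled correctly and match the paper's argument.

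The genuine gap is in your treatment of the remaining corner case ($\lambda=\pm 1$, $-\lambda$ an isolated eigenvalue of $T$, $\lambda\notin\sigma(T)$, $\lambda\notin\sigma_p(H)$). Your proposed fix of ``upgrading'' $h\in\ker(T+\lambda)$ to a $\lambda$-eigenvector of $H$ via $d^*h$ and $S_od^*h$ cannot succeed: the very identity you cite, $(\lambda+S_o)d^*h=0$, shows that $S_od^*h=-\lambda\, d^*h$, so $S_od^*h$ contributes nothing new, while $Hd^*h=d^*Th=-\lambda\, d^*h$ shows $d^*h$ is a $(-\lambda)$-eigenvector, not a $\lambda$-eigenvector. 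This is consistent with Lemma~\ref{lem:ev_inherited}, which gives $\ker(H-\lambda)\cap\mathcal{I}=S_od^*\ker(T-\lambda)$ at $\lambda=\pm1$; in your corner case that space is $\{0\}$. So no combination inside $\mathcal{I}$ will produce the desired eigenvector, and your ``or by invoking $\mathcal{B}_\pm$'' remains only a hope, not an argument.

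The paper does not attempt this construction at all. Instead it splits $(\sigma(T)\cup\sigma(-T))\cap\{\pm1\}$ into its continuous and point parts: for the continuous part it uses the same approximation-plus-closedness trick you use, and for the point part it simply invokes the inclusion $\sigma_p(T)\cup\sigma_p(-T)\subset\sigma_p(H)$, which it regards as already established in Section~\ref{subsec:point_sp} (the point-spectrum formula stated in Theorem~\ref{thm:spectrum}). Citing that inclusion immediately gives $\lambda\in\sigma_p(H)$, hence $\lambda\in K$, contradicting $\lambda\notin K$---which is precisely the contradiction you were aiming for. So the missing step in your write-up is to appeal to the full point-spectrum description rather than to Lemma~\ref{lem:ev_inherited} alone.
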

\begin{proof}
If $\sigma_c(H)\cap\{-1, +1\} = \emptyset$, then $K=\sigma(H)\cap\{-1, +1\}$ and Lemma \ref{lem:cont_lem2} says that $\sigma(H)\setminus K = \sigma(H)\setminus\{-1, +1\} = (\sigma(T)\cup\sigma(-T))\setminus\{-1, +1\} \subset \sigma(T)\cup\sigma(-T)$.
Thus, we now consider only $\sigma_c(H)\cap\{-1, +1\} \neq \emptyset$ case.
For any $x\in\sigma_c(H)\cap\{-1, +1\}$, since $x\in\sigma_c(H)$, there exists a sequence $\{x_n\}_{n\in\mathbb{N}}\subset \sigma_c(H)\setminus\{-1, +1\}$ such that $\lim_{n\to\infty}x_n = x$.
Moreover, by Lemma \ref{lem:cont_lem2}, $x_n\in \sigma(T)\cup\sigma(-T)$ holds.
Here, $x\in\sigma(T)\cup \sigma(-T)$ also holds because $\sigma(T)\cup\sigma(-T)$ is a closed set.
Then, we have $\sigma_c(H)\cap\{-1, +1\}\subset \sigma(T)\cup\sigma(-T)$ and $\sigma(H)\setminus K \subset (\sigma(T)\cup\sigma(-T))\setminus K$ because of
\[
    \sigma(H) = (\sigma(H)\setminus\{-1, +1\}) \cup (\sigma_c(H)\cap\{-1, +1\}) \cup K
    \ \subset\  (\sigma(T)\cup\sigma(-T))\cup K.
\]
Moreover, by using a same method, we can easily show that $(\sigma_c(T)\cup\sigma_c(-T))\cap\{-1, +1\}\subset \sigma(H)$.
Since we have already proved that $\sigma_p(H) = \sigma_p(T)\cup\sigma_p(-T)\cup\{-1\}^{\dim \mathcal{B}_{-}}\cup\{+1\}^{\dim \mathcal{B}_{+}}$ in in Section \ref{subsec:point_sp},
so $\sigma_p(T)\cup\sigma_p(-T) \subset \sigma_p(H)$ holds and we have
\begin{align*}
    \sigma(T)\cup\sigma(-T) &\subset
    ((\sigma(T)\cup\sigma(-T))\setminus\{-1, +1\}) \cup 
    ((\sigma_c(T)\cup\sigma_c(-T))\cap\{-1, +1\}) \cup
    (\sigma_p(T)\cup\sigma_p(-T))
    \\
    &\subset \sigma(H).
\end{align*}
Thus $(\sigma(T)\cup\sigma(-T))\setminus K \subset \sigma(H)\setminus K$ holds, and the proof is completed.
\end{proof}



\printbibliography

\end{document}